\NewDocumentCommand{\binomial}{omm}
 {%
  \genfrac(){0pt}{}{#2}{#3}%
  \IfValueT{#1}{_{\!#1}}%
 }
\NewDocumentCommand{\eulerian}{omm}
 {%
  \genfrac<>{0pt}{}{#2}{#3}%
  \IfValueT{#1}{_{\!#1}}%
 }
\def \s {\sigma}
\theoremstyle{plain}
\newtheorem{thm}{Theorem}[section]
\newtheorem{prop}[thm]{Proposition}
\newtheorem{defn}[thm]{Definition}
\theoremstyle{definition}
\newtheorem{example}[thm]{Example}
\newtheorem{conj}[thm]{Conjecture}
\newcommand{\sfs}{\mathsf{s}}
\def\yz#1\yz {{\color{blue} [[YZ: #1]] }}
\def\bu#1\bu {{\color{red} [[BU: #1]] }}
\title{Connecting Scalar Amplitudes using The Positive Tropical Grassmannian}
\author[a]{Freddy Cachazo}\emailAdd{fcachazo@pitp.ca}
\author[a,b]{and Bruno Gim\'enez Umbert}\emailAdd{bgimenezumbert@pitp.ca}
\affiliation[a]{Perimeter Institute for Theoretical Physics, Waterloo, ON N2L 2Y5, Canada}
\affiliation[b]{Department of Physics and Astronomy, Western University, London, ON N6A 5B7, Canada}
\abstract{The biadjoint scalar partial amplitude, $m_n(\mathbb{I},\mathbb{I})$, can be expressed as a single integral over the positive tropical Grassmannian thus producing a {\it Global Schwinger Parameterization}. The first result in this work is an extension to all partial amplitudes $m_n(\alpha,\beta)$ using a limiting procedure on kinematic invariants that produces indicator functions in the integrand. The same limiting procedure leads to an integral representation of $\phi^4$ amplitudes where indicator functions turn into Dirac delta functions. Their support decomposes into $\textrm{C}_{n/2-1}$ regions, with $\textrm{C}_q$ the $q^{\rm th}$-Catalan number. The contribution from each region is identified with a $m_{n/2+1}(\alpha,\mathbb{I})$ amplitude. We provide a combinatorial description of the regions in terms of non-crossing chord diagrams and propose a general formula for $\phi^4$ amplitudes using the Lagrange inversion construction. We start the exploration of $\phi^p$ theories, finding that their regions are encoded in non-crossing $(p-2)$-chord diagrams. The structure of the expansion of $\phi^p$ amplitudes in terms of $\phi^3$ amplitudes is the same as that of Green functions in terms of connected Green functions in the planar limit of $\Phi^{p-1}$ matrix models. We also discuss possible connections to recent constructions based on Stokes polytopes and  accordiohedra.}
\begin{document}
\maketitle
\addtocontents{toc}{\protect\setcounter{tocdepth}{1}}
\def \tr {\nonumber\\}
\def \la  {\langle}
\def \ra {\rangle}
\def\hset{\texttt{h}}
\def\gset{\texttt{g}}
\def\sset{\texttt{s}}
\def \be {\begin{equation}}
\def \ee {\end{equation}}
\def \ba {\begin{eqnarray}}
\def \ea {\end{eqnarray}}
\def \k {\kappa}
\def \h {\hbar}
\def \r {\rho}
\def \l {\lambda}
\def \be {\begin{equation}}
\def \en {\end{equation}}
\def \bes {\begin{eqnarray}}
\def \ens {\end{eqnarray}}
\def \red {\color{Maroon}}
\def \pt {{\rm PT}}
\def \s {\sigma} % Is there no problem with line 117?
\def \ls {{\rm LS}}
\def \ma {\Upsilon}
\def \s {\textsf{s}}
\def \t {\textsf{t}}
\def \R {\textsf{R}}
\def \W {\textsf{W}}
\def \U {\textsf{U}}
\def \e {\textsf{e}}

\numberwithin{equation}{section}

\section{Introduction} \label{sec1}

The positive tropical Grassmannian $\textrm{Trop}^+G(2,n)$ \cite{SWTrop} is the space of all planar metric trees with $n$ leaves and vertices of any degree. Hence, in some sense it governs the singularity structure of any planar tree-level scattering amplitude of massless scalar fields with arbitrary polynomial interactions. 

For example, tree-level partial amplitudes of massless scalars in the biadjoint representation of $U(N)\times U(\tilde N)$ with only cubic interactions, $m_n(\alpha ,\beta)$, are computed by summing over all $n$-particle Feynman diagrams which are planar with respect to both, $\alpha$ and $\beta$, orderings. The partial amplitudes with the largest number of diagrams are the ones where both orderings coincide, e.g., $m_n(\mathbb{I},\mathbb{I})$. In a recent work by Cachazo and Early \cite{Cachazo:2020wgu}, a formula for $m_n(\mathbb{I},\mathbb{I})$ as a single integral over ${\rm Trop}^+G(2,n)$ was presented,
\be
m_n(\mathbb{I},\mathbb{I}) = \int_{\mathbb{R}^{n-3}}\!\! d^{n-3}x\, {\rm exp}(-F_n(x))\, .
\ee
Here the function $F_n(x)=  \sum_{a,b}t_{[a,b]}f_{[a,b]}(x)$ is a piece-wise linear function defined on $\textrm{Trop}^+G(2,n)$ via the ``tropical cross-ratios'' $f_{[a,b]}(x)$, and $t_{[a,b]}:=(p_a+p_{a+1}+\cdots +p_b)^2$ are standard planar kinematic invariants. This construction is reviewed in detail in section \ref{sec2}.

When restricted to a single cone of ${\rm Trop}^+G(2,n)$, the corresponding integral becomes nothing but the Schwinger parameterization of a single Feynman diagram. In this sense, ${\rm Trop}^+G(2,n)$ provides a global Schwinger parameterization of the amplitude. The regions in $\textrm{Trop}^+G(2,n)$ corresponding to trees with one or more higher-degree vertices are of measure zero and therefore do not contribute to the integral.

In this work we continue the study of global Schwinger parameterizations by extending the construction to all partial amplitudes $m_n(\alpha ,\beta)$ and to amplitudes in scalar theories with $\phi^p$-interactions and $p>3$. In both cases, our construction starts with the global Schwinger formulation of $m_n(\mathbb{I},\mathbb{I})$ and proceeds with a limiting procedure on the planar kinematic invariants to obtain  amplitudes in other theories. 

In the case of $m_n(\alpha ,\mathbb{I})$ partial amplitudes, the limiting procedure on kinematic invariants produces indicator functions in the integrand. These indicator functions describe the regions of the original $\textrm{Trop}^+G(2,n)$ that intersect with $\textrm{Trop}^+G_{\alpha}(2,n)$, defined with the ordering $\alpha$. Our first result is the following, 
\be
m_n(\alpha ,\mathbb{I}) = \int_{\mathbb{R}^{n-3}}\!\! d^{n-3}x\, {\rm exp}\left(-G_\alpha (x)\right) {\bf 1}_{S(H_\alpha )}(x)\,,
\ee
where $G_{\alpha}(x)$ is a piece-wise linear function that depends on the ordering $\alpha$, and ${\bf 1}_{S(H_\alpha )}(x)$ is an indicator function. We provide a derivation and examples of this formula in section \ref{sC}.

The next set of theories we study are a straightforward generalization of $m_n(\mathbb{I},\mathbb{I})$ in which one sums over only planar Feynman diagrams with $\phi^p$-interaction vertices. We denote such amplitudes by $A^{\phi^p}_n$. In this work we only consider planarity with respect to the cannonical ordering $\mathbb{I}$ and therefore there is no need to specify it. Of course, $A^{\phi^p}_n$ is defined to be zero if no Feynman diagram exists for the particular number of external particles.  

In order to obtain the global Schwinger formulation of $A^{\phi^p}_n$, the limiting procedure on kinematic invariants produces distributions in the integrand. In particular, the Dirac delta functions localize the integral over $\mathbb{R}^{n-3}$ to regions of measure zero\footnote{Each corresponds to a polyhedral cone in $\textrm{Trop}^+G(2,n)$.}, those where trees with higher-degree vertices live. For example, for $\phi^4$ amplitudes we find
\be
A^{\phi^4}_n =  \int_{\mathbb{R}^{n-3}}\!\! d^{n-3}x\, {\rm exp}\left(-\sum_{a<b}^{\rm even}t_{[a,b]}f_{[a,b]}(x)\right)Q(x)\,,
\ee
where $Q(x)$ is the distribution obtained from the limiting procedure and the sum is over $b-a\equiv 0\!\mod 2$. This procedure, together with the derivation of the global Schwinger formula for $A^{\phi^4}_n$, is developed in section \ref{sD}. In section \ref{examples4} we provide several examples.

Our global Schwinger formula for $\phi^4$ amplitudes reveals surprising connections to cubic amplitudes: we find that each of the $\textrm{C}_{n/2-1}$ regions that define the support of the distributions in the integrand is in bijection with a $m_{n/2+1}(\alpha ,\mathbb{I})$ amplitude. We study this feature in section \ref{combi}, where we propose a combinatorial procedure to obtain such regions from non-crossing chord diagrams. This implies that $A_n^{\phi^4}$ can also be expressed as a sum over regions. These results motivate a formula for the general {\it schematic} structure of $A_n^{\phi^4}$ in terms of cubic amplitudes based on the Lagrange inversion procedure
\be
A^{\phi^4}_n = \left(\frac{2}{n\, h_0^{n/2-1}}\right)\frac{1}{2\pi i}\oint_{|z|=\epsilon}dz\left(\frac{h(z)}{z}\right)^{n/2}~~{\rm with}~~h(x)=\sum_{i=0}^{\infty}m_{i+2}(\mathbb{I},\mathbb{I})\,x^i\,.
\ee
Here $h_0=m_2(\mathbb{I},\mathbb{I}):=P^2$ and $m_3(\mathbb{I},\mathbb{I}):=1$. 

The fact that $A^{\phi^4}_n$ is computed as a sum over regions is very reminiscent of the recent constructions based on Stokes polytopes \cite{yuliy,Banerjee:2018tun,Kalyanapuram:2019nnf,Aneesh:2019cvt,Salvatori:2019phs,Srivastava:2020dly}, which were motivated by the connection between $\phi^3$ amplitudes and the associahedron \cite{Mizera:2017cqs,Arkani-Hamed:2017mur}. It is known that some Stokes polytopes are associahedra and therefore their contribution could coincide with that of some of the regions we find. However, we find that only associahedra or intersections of associahedra \cite{Cachazo:2021llu} appear in our construction.  

In section \ref{phip} we start the exploration of $\phi^p$ amplitudes in general. We propose an analogous limiting procedure and obtain the corresponding global Schwinger formula
\be
A_n^{\phi^p}=\int_{\mathbb{R}^{n-3}}d^{n-3}x\,\textrm{exp}\left(-\sum_{a<b}^{{\cal K}_p}t_{[a,b]}f_{[a,b]}(x)\right)Q(x)\,.
\ee
where ${\cal K}_p$ indicates that the sum is over ordered pairs $(a,b)$ such that $b-a\equiv 0\!\! \mod p-2$.

We also propose a diagrammatic construction to find the regions that compute $A_n^{\phi^p}$ as non-crossing $(p-2)$-chord diagrams (these are counted by the Fuss-Catalan numbers\footnote{Here $\textrm{FC}_m(q,r)$ is the Fuss-Catalan number given by $$\textrm{FC}_m(q,r)\equiv \frac{r}{mq+r} {mq+r \choose m}\,.$$} $\textrm{FC}_{(n-2)/(p-2)}(p-2,1)$). The sum over all contributions leads to the expected number of trees in $A_n^{\phi^p}$, which is also given by Fuss-Catalan numbers, $\textrm{FC}_{(n-2)/(p-2)}(p-1,1)$. Moreover, we point out a connection to $m_{(n+2(p-3))/(p-2)}(\alpha,\mathbb{I})$ amplitudes and provide some examples. We also propose a formula giving the general structure of $A_n^{\phi^p}$ in terms of cubic amplitudes. 

Of course, $A_n^{\phi^p}$ amplitudes have also been recently studied and found to be related to a class of polytopes known as accordiohedra \cite{2017arXiv170309953M,Raman:2019utu,Aneesh:2019ddi,Kojima:2020tox,Kalyanapuram:2020vil,John:2020jww,Kalyanapuram:2020axt,Jagadale:2021iab} (see also \cite{Baadsgaard:2015ifa,Baadsgaard:2016fel} for related work). Some accordiohedra are associahedra and therefore as in the case of $\phi^4$ we suspect that contributions from such accordiohedra could coincide with that of some of our regions. 

In section \ref{discs} we conclude with discussions on possible future research directions including connections between our schematic formulas for $A_n^{\phi^p}$ and those that express general Green functions in terms of connected Green functions in planar theories, a way to connect to accordiohedra constructions, and possible extensions to CEGM generalized amplitudes. We end this work with two appendices. In appendix \ref{lagrange} we give a review of the Lagrange inversion formula and provide details on the derivations of the schematic structures of $A_n^{\phi^p}$ amplitudes. In appendix \ref{appB} we provide an example which uses the global Schwinger formula to compute the contribution to regions without actually carrying out the integral by identifying the corresponding $\phi^3$ amplitude.

\section{Global Schwinger Formula for $m_n(\mathbb{I},\mathbb{I})$}\label{sec2}

In this section we review the construction of the global Schwinger formula for $m_n(\mathbb{I},\mathbb{I})$ introduced and proved in \cite{Cachazo:2020wgu}. 

Consider a single metric tree ${\cal T}$ with $n$ leaves and all internal vertices of degree three. We follow the mathematical convention and call these binary trees\footnote{The name stems from the fact that when a leaf is selected as a root, then walking up along the tree implies that at each internal vertex there are exactly two possible edges to choose from in order to continue the walk.}. 

Label the leaves of ${\cal T}$ so that it is planar with respect to the ordering $\mathbb{I}:= (1,2,\ldots ,n)$. Its contribution to an amplitude can be constructed as follows. First, define the function
\begin{align}\label{oneF}
    {\cal F}({\cal T}):=-\sum_{1\leq a,b \leq n}d_{ab}s_{ab}\,,
\end{align}
where $d_{ab}$ represents the matrix of distances, i.e., the distance from leaf $a$ to leaf $b$. Mandelstam invariants $s_{ab}:=(k_a+k_b)^2$ satisfy 
\begin{align}\label{con}
   s_{ab}=s_{ba},\quad s_{aa}=0, \quad {\rm and}\quad  \sum_{b=1}^ns_{ab}=0 \quad  \forall a.
\end{align}
Let $e_a$ be the length of the edge containing the $a^{\rm th}$ leaf and write $d_{ab}=e_a+e_b+d^{\textrm{int}}_{ab}$ where $d^{\textrm{int}}_{ab}$ is the length of the internal edges along the unique path connecting $a$ and $b$. Due to momentum conservation \eqref{con}, $e_a$ drops out from the function ${\cal F}({\cal T})$ and it can be written as 
\begin{align}
    {\cal F}({\cal T})=\sum_{i=1}^{n-3}f_{I_i} t_{I_i} \,,
\end{align}
where $f_I$ denotes the length of an internal edge that partitions the set leaves of ${\cal T}$ as $I\cup I^{\rm c} =[n]$. The kinematic invariant multiplying $f_I$ is defined as the square of the momentum flowing through the edge under consideration, i.e.,
\be 
t_I := \left(\sum_{a\in I}k_a\right)^2 =\sum_{\{a,b\}\subset I}s_{ab}\,.
\ee
The conditions \eqref{con} guarantee that $t_I = t_{I^{\rm c}}$.

Finally, the contribution to the amplitude is
\begin{align}\label{SR1}
    {\cal R}({\cal T})=\int_{O^+}\!\!\!\! d^{n-3}f \, {\rm exp}(-{\cal F}({\cal T})) = \prod_{i=1}^{n-3}\int_0^{\infty}\!\!\! df_{I_i}\,\textrm{exp}(-f_{I_i}t_{I_i}) = \prod_{i=1}^{n-3}\frac{1}{t_{I_i}} \,.
\end{align} 
where $O^+:=(\mathbb{R}^+)^{n-3}$ is the positive orthant in $\mathbb{R}^{n-3}$. Of course, the integral formulas are only defined for $t_{I_i}>0$ but once the answer is in the rational function form, it is valid for any values $t_{I_i}\neq 0$.

Note that the second integral in \eqref{SR1} is the standard Schwinger formula and the edge lengths $f_{I}$ are the Schwinger parameters.

Denoting the set of all binary trees which are planar with respect to an ordering of the leaves $\alpha :=(\alpha_1,\alpha_2,\ldots ,\alpha_n)$ by ${\tt Pl}_n(\alpha)$, the amplitude is computed as
\be\label{amp1}
m_n(\mathbb{I},\mathbb{I}) = \!\! \sum_{{\cal T}\in\, {\tt Pl}_n(\mathbb{I})} {\cal R}({\cal T})\,.
\ee 

In \cite{SSTrop}, Speyer and Sturmfels introduced the tropical Grassmannian ${\rm Trop}\, G(2,n)$ and showed that it agrees with the moduli space of phylogenetic trees studied by Billera, Holmes and Vogtmann (BHV) \cite{BHV}. 

Motivated by the work of Postnikov \cite{Alex} on totally positive Grassmannians, Speyer and Williams introduced positive tropical Grassmannians \cite{SWTrop}. In particular, ${\rm Trop}^+\, G(2,n)$ parameterizes the space of planar trees on $n$ leaves. This means that ${\rm Trop}^+\, G(2,n)$ must provide a global definition of Schwinger parameters which unifies all the individual Schwinger representations into a single integral.

In order to present the formula, one starts with $G^+(2,n)$ and then tropicalizes the Pl\"ucker coordinates. Such a positive parameterization of $G^+(2,n)$ is given by\footnote{Here we suppress the torus coordinates which under tropicalization map to the $e_a$'s which drop out.} 
\begin{align}\label{preT}
\begin{pmatrix}
1 & 0 & -1 & -(1+\tilde{x}_1) & -(1+\tilde{x}_1+\tilde{x}_2) & \cdots & -(1+\tilde{x}_1+\cdots+\tilde{x}_{n-3})\\
0 & 1 & 1 & 1 & 1 & \cdots & 1
\end{pmatrix}
\end{align}
where $\tilde{x}_a\in\mathbb{R}^+$. Note that any minor $\Delta_{ab}$ with $a<b$ is positive.

The tropicalization of a minor $\Delta_{ab}$ of \eqref{preT} proceeds by replacing addition, ${\tilde x_i}+{\tilde x_j}$, with the min-function, ${\rm min}(x_i,x_j)$, and multiplication ${\tilde x_i}{\tilde x_j}$ with addition $x_i+x_j$. Note that we drop the tilde to differentiate the two sets of variables. This is important since while $\tilde{x}_a\in\mathbb{R}^+$, the tropical variables are unconstrained, i.e., $x_a\in\mathbb{R}$.

The connection to the space of planar trees is simply the identification of (minus) the distance matrix $d_{ab}$ with the tropical Pl\"ucker coordinates. It is not difficult to evaluate $\Delta^{\rm Trop}_{ab}(x)$ and find
\be\label{tropMin} 
-d_{ab} \leftrightarrow \Delta^{\rm Trop}_{ab}(x) = \left\{ \begin{array}{ccc}
 {\rm min}(x_{a-2},x_{a-1},\ldots, x_{b-3}) &\,\,\,\,  & 2\leq a\leq b-1,\, 4\leq b \leq n \\
 0 & & {\rm otherwise}
\end{array}\right. , 
\ee 
where $x_0:=0$ and whenever there is a single argument ${\rm min}(x) := x$.

It might seem strange that some $d_{ab}$ are sent to zero, however, recall that momentum conservation makes the physics independent of the lengths $e_a$ which can then be used to set to zero some of the entries $d_{ab}$. The choice corresponds to a choice of frame in \eqref{preT}.

Using this in \eqref{oneF} one defines the ``tropical potential function''
\begin{align}\label{tropP}
    F_n(x):=\sum_{1\leq a < b \leq n}\!\!\! s_{ab}\, \Delta^{\rm Trop}_{ab}(x) =\sum_{b=4}^n\sum_{a=2}^{b-1}\! s_{ab}\,{\rm min}(x_{a-2},x_{a-1},\ldots, x_{b-3})\,.
\end{align}

The scattering amplitude \eqref{amp1} now has a single integral representation \cite{Cachazo:2020wgu}
\be\label{amp2}
m_n(\mathbb{I},\mathbb{I}) = \int_{\mathbb{R}^{n-3}}\!\! d^{n-3}x\, {\rm exp}(-F_n(x))\,.
\ee 
Note that the integral is over all $\mathbb{R}^{n-3}$. 

Of course, the integral in \eqref{amp2} might not exist for some values of kinematic invariants. Let us discuss the regions of convergence. The best approach is to write the tropical potential function $F_n(x)$ in terms of planar kinematic invariants. 

This is easily done by introducing the notation $t_{[a,b]}$ to denote $t_I$ with $I=\{ a,a+1,\ldots ,b-1,b \}$, a set of consecutive labels, and using
\be\label{fourTerms} 
s_{ab} = t_{[a,b]}-t_{[a+1,b]}-t_{[a,b-1]}+t_{[a+1,b-1]}\,.
\ee 
Here one defines $t_{[c,d]}=0$ whenever $c\geq d$. For example, $s_{14}=t_{[1,4]}-t_{[2,4]}-t_{[1,3]}+t_{[2,3]}$ while $s_{23}=t_{[2,3]}$. 

Using \eqref{fourTerms} in $F_n(x)$ and arranging by planar kinematic invariants one finds
\be\label{PlanarF} 
F_n(x) = \sum_{a<b}t_{[a,b]}\left( \Delta^{\rm Trop}_{a,b}(x)-\Delta^{\rm Trop}_{a,b+1}(x)-\Delta^{\rm Trop}_{a-1,b}(x)+\Delta^{\rm Trop}_{a-1,b+1}(x) \right)\,.
\ee 

\begin{figure}[h!]
	\centering
	\includegraphics[width=0.30\linewidth]{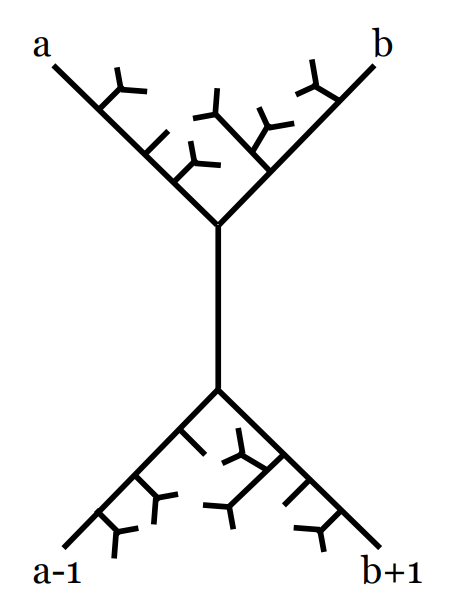}
	\caption{In a generic tree, the combination of distances $-d_{a,b}+d_{a,b+1}+d_{a-1,b}-d_{a-1,b+1}$ equals twice the length of the edge which removal would split the diagram into two parts, one containing $a$ and $b$ and the other $a-1$ and $b+1$.}
	\label{fig:abTree}
\end{figure}

The quantity in brackets has a very beautiful interpretation when thought of as $-d_{a,b}+d_{a,b+1}+d_{a-1,b}-d_{a-1,b+1}$ for a single planar Feynman diagram. This is nothing but twice the length of the edge partitioning the labels as $\{a,a+1,\ldots ,b-1,b\}\cup \{ b+1,b+2,\ldots, a-2,a-1\}$, i.e., what used to be $f_{[a,b]}$, see figure \ref{fig:abTree}. Of course, for this to be the case, it better be that it is always non-negative. This is easily seen to be the case by noticing the following general property for any three real numbers $A,B,C$,
\be  
A-{\rm min}(A,B)-{\rm min}(A,C)+{\rm min}(A,B,C) \geq 0\,.
\ee 
The proof is left as an exercise to the reader\footnote{Hint: $A-{\rm min}(A,B)=-{\rm min}(A-A,B-A)=-{\rm min}(0,B-A) \geq 0$. Repeated use leads to  $-{\rm min}(0,B-A)-{\rm min}(0,(B-A)-{\rm min}(0,C-A))$.}.

It is important to mention that the condition 
\be 
\Delta^{\rm Trop}_{a,b}(x)+\Delta^{\rm Trop}_{a-1,b+1}(x) \geq \Delta^{\rm Trop}_{a,b+1}(x)+\Delta^{\rm Trop}_{a-1,b}(x)
\ee 
is part of what is referred to as a positive tropical Pl\"ucker relation and it must be satisfied in order to be in ${\rm Trop}^+G(2,n)$. 

Finally, the condition for the integral formula \eqref{amp2} to exist is simply that all planar kinematic invariants be positive. In the rest of this work, only the formula with planar invariants will be used. This why it is convenient to introduce special notation for the combination of tropical minors in \eqref{PlanarF}, 
\be\label{defFint} 
f_{[a,b]}(x):=\Delta^{\rm Trop}_{a,b}(x)-\Delta^{\rm Trop}_{a,b+1}(x)-\Delta^{\rm Trop}_{a-1,b}(x)+\Delta^{\rm Trop}_{a-1,b+1}(x)\,, 
\ee 
so that $F_n(x)=  \sum_{a,b}t_{[a,b]}f_{[a,b]}(x)$ and we arrive at the final form of the global Schwinger formula \cite{Cachazo:2020wgu}
\be\label{amp3}  
m_n(\mathbb{I},\mathbb{I}) = \int_{\mathbb{R}^{n-3}}\!\! d^{n-3}x\, {\rm exp}\left(-\sum_{a< b}t_{[a,b]}f_{[a,b]}(x)\right)\,.
\ee

\section{From $m_n(\mathbb{I},\mathbb{I})$ to $m_n(\alpha , \beta)$}\label{sC}

In order to extend the global Schwinger formula \eqref{amp3} for $m_n(\mathbb{I},\mathbb{I})$ to all other partial amplitudes, $m_n(\alpha , \beta)$, let us first review their definition. Recall that ${\tt Pl}_n(\alpha)$ denotes the set of all binary trees which are planar with respect to the ordering of the leaves defined by $\alpha =(\alpha_1,\alpha_2,\ldots ,\alpha_n)$. Given a second ordering $\beta =(\beta_1,\beta_2,\ldots ,\beta_n)$, one can determine the set of trees which are planar with respect to both orderings by simply finding the intersection ${\tt Pl}_n(\alpha)\cap {\tt Pl}_n(\beta)$, and therefore,
\be\label{ampAll}
m_n(\alpha , \beta) = \!\! \sum_{{\cal T}\in\, {\tt Pl}_n(\alpha)\cap \, {\tt Pl}_n(\beta)} {\cal R}({\cal T})\,.
\ee 
Depending on conventions, there might be an overall sign which depends on the two orderings chosen. Since our main concern is the kinematic dependence of the amplitude, we refer the reader to \cite{Cachazo:2013iea} for details on the definition of the sign\footnote{For example, one could decide to define partial amplitudes so that the sign is included in the traces of the flavour groups. While convenient when individual partial amplitudes are considered, this makes properties such as the $U(1)$-decoupling identity, which involves several partial amplitudes, cumbersome.}. 

Without loss of generality we assume that $\beta =\mathbb{I} =(1,2,\ldots ,n)$. Now, recall that $t_I$ is planar with respect to an ordering if the set $I$ coincides with the set of labels of an interval in the ordering. For example, if $\alpha = (1,2,5,4,3,6)$ then $I=\{ 1,2,5\}$ is planar with respect to $\alpha$ but not with respect to $\mathbb{I}$ while $I=\{ 3,4,5\}$ is planar with respect to both orderings. 

Let ${\tt PK}(\alpha)$ denote the set of all planar kinematic invariants with respect to $\alpha$. 

\begin{prop}\label{limit1}
Consider the set of planar kinematic invariants ${\tt PK}(\mathbb{I})$, set  $t_I=1/\epsilon$ whenever $t_I \notin {\tt PK}(\alpha)$, and evaluate $m_n(\mathbb{I},\mathbb{I})$ on it to get a function of $\epsilon$ and kinematic invariants in  ${\tt PK}(\alpha)\cap {\tt PK}(\mathbb{I})$, $m_n^{(\epsilon )}(\mathbb{I},\mathbb{I})$. Then
\be\label{amp9} 
m_n(\alpha ,\mathbb{I}) = \lim_{\epsilon\to 0} m_n^{(\epsilon )}(\mathbb{I},\mathbb{I})\,.
\ee 
\end{prop}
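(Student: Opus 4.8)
The plan is to compare the two amplitudes term by term at the level of individual Feynman diagrams, using the Schwinger representation ${\cal R}({\cal T})=\prod_{i=1}^{n-3}1/t_{I_i}$ from \eqref{SR1}. First I would recall that $m_n(\mathbb{I},\mathbb{I})=\sum_{{\cal T}\in{\tt Pl}_n(\mathbb{I})}{\cal R}({\cal T})$, where each summand is a product over the $n-3$ internal edges of ${\cal T}$, labelled by the planar sets $I_i\in{\tt PK}(\mathbb{I})$ that they induce. A tree ${\cal T}\in{\tt Pl}_n(\mathbb{I})$ also lies in ${\tt Pl}_n(\alpha)$ precisely when every one of its internal edges induces a bipartition $I_i\cup I_i^{\rm c}$ in which $I_i$ (or equivalently $I_i^{\rm c}$) is an interval with respect to $\alpha$ — that is, when all $t_{I_i}\in{\tt PK}(\alpha)\cap{\tt PK}(\mathbb{I})$. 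So the key dichotomy is: either ${\cal T}\in{\tt Pl}_n(\alpha)\cap{\tt Pl}_n(\mathbb{I})$ and all its propagators survive the substitution $t_I\mapsto 1/\epsilon$ untouched, or ${\cal T}\notin{\tt Pl}_n(\alpha)$ and it has at least one internal edge with $t_{I_i}\notin{\tt PK}(\alpha)$.

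Next I would carry out the limit. For a tree ${\cal T}\notin{\tt Pl}_n(\alpha)$, let $k\geq 1$ be the number of its internal edges with $t_{I_i}\notin{\tt PK}(\alpha)$; after the substitution its contribution becomes $\epsilon^{k}\prod_{j}1/t_{I_j}$ over the remaining (finitely many, nonzero, $\epsilon$-independent) propagators, hence $\to 0$ as $\epsilon\to 0$. For a tree ${\cal T}\in{\tt Pl}_n(\alpha)\cap{\tt Pl}_n(\mathbb{I})$, the substitution does nothing and ${\cal R}({\cal T})$ is unchanged. Summing, $\lim_{\epsilon\to 0}m_n^{(\epsilon)}(\mathbb{I},\mathbb{I})=\sum_{{\cal T}\in{\tt Pl}_n(\alpha)\cap{\tt Pl}_n(\mathbb{I})}{\cal R}({\cal T})$, which by \eqref{ampAll} (with $\beta=\mathbb{I}$) is exactly $m_n(\alpha,\mathbb{I})$, up to the overall ordering-dependent sign that we have agreed to suppress. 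The fact that the sum is finite means there are no issues exchanging limit and sum.

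The one point requiring genuine care — and the step I expect to be the main obstacle — is justifying that $m_n^{(\epsilon)}(\mathbb{I},\mathbb{I})$ is actually well defined by the stated substitution, i.e. that setting $t_I=1/\epsilon$ for \emph{every} $I\notin{\tt PK}(\alpha)$ simultaneously does not produce spurious cancellations or $0/0$ ambiguities in the rational function $m_n(\mathbb{I},\mathbb{I})$ before the limit is taken. Here one uses that $m_n(\mathbb{I},\mathbb{I})$, as a sum of the manifestly distinct pole products ${\cal R}({\cal T})$, has only simple poles at $t_I=0$ for planar $I$, so evaluating at generic large $t_I=1/\epsilon$ for the non-$\alpha$-planar invariants lands away from all singularities for small $\epsilon$; each diagram's contribution is then an honest rational function of $\epsilon$ and the surviving invariants, with the leading behaviour $\epsilon^{k}$ as claimed. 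I would also remark that the invariants in ${\tt PK}(\alpha)\cap{\tt PK}(\mathbb{I})$ remain free parameters throughout, so the limit is an identity of rational functions on that kinematic space, not merely a numerical coincidence. A small illustrative example (e.g. $n=5$ with $\alpha$ a single transposition away from $\mathbb{I}$) would make the mechanism transparent and could be deferred to the examples that follow.
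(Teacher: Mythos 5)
Your argument is correct and is essentially the same as the paper's: decompose $m_n(\mathbb{I},\mathbb{I})$ into its Feynman-diagram contributions ${\cal R}({\cal T})$, note that any tree not planar with respect to $\alpha$ has at least one propagator sent to $\epsilon$ and hence vanishes in the limit, while trees planar with respect to both orderings are $\epsilon$-independent and survive unchanged. Your additional remarks on the well-definedness of the substitution and the finiteness of the sum are fine but not load-bearing beyond what the paper's two-sentence proof already uses.
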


\begin{proof}
Since all trees in $m_n^{(\epsilon )}(\mathbb{I},\mathbb{I})$ that do not contribute to $m_n(\alpha ,\mathbb{I})$ contain at least one kinematic invariant that has been set to $1/\epsilon$, in the limit their contribution to the amplitude vanishes. Since invariants in ${\tt PK}(\alpha)\cap {\tt PK}(\mathbb{I})$ are $\epsilon$-independent, so are the corresponding Feynman diagram contributions ${\cal R}({\cal T})$ to \eqref{ampAll}.
\end{proof}

The construction of the global Schwinger formula for $m_n(\alpha ,\mathbb{I})$ proceeds in exactly the same way. Let us define the $\epsilon$-dependent tropical potential function, $F_n(x,\epsilon)$, by starting with $F_n(x)$ and restricting to the kinematic space of Proposition \ref{limit1}. More explicitly,
\be 
F_n(x) := \sum_{I \in \, {\tt PK}(\mathbb{I})} t_I\, f_I(x) \to F_n(x,\epsilon ) := \!\!\! \sum_{I \in {\tt PK}(\alpha)\cap\, {\tt PK}(\mathbb{I})}\!\!\! t_I \, f_I(x) + \frac{1}{\epsilon }\left( \sum_{I \notin {\tt PK}(\alpha)\cap \, {\tt PK}(\mathbb{I})} \!\!\! f_I(x)\right)\,.
\ee 
Let us define the finite and divergent parts to be $F_n(x,\epsilon ) = G_\alpha (x)+ \frac{1}{\epsilon} H_\alpha (x)$, i.e.,
\be\label{aux1} 
G_\alpha (x):=\sum_{I \in {\tt PK}(\alpha)\cap\, {\tt PK}(\mathbb{I})}\!\!\! t_I \, f_I(x)\,,\quad H_\alpha (x):= \sum_{I \notin {\tt PK}(\alpha)\cap \, {\tt PK}(\mathbb{I})} \!\!\! f_I(x)\,.
\ee 
Note that we have chosen to add the subscript $\alpha$ to indicate that the form of the functions depends on the $\alpha$-ordering.

Using \eqref{amp9} and \eqref{amp2} one finds
\be 
m_n(\alpha ,\mathbb{I}) = \lim_{\epsilon\to 0^+} \int_{\mathbb{R}^{n-3}}\!\! d^{n-3}x\, {\rm exp}(-F_n(x,\epsilon))\,.
\ee
Defining the limit as a directional limit from above is necessary for the convergence of the integral since all planar kinematic invariants must be positive. Moreover, it also allows the limit to be taken inside the integral. Using \eqref{aux1} one finds
\be 
m_n(\alpha ,\mathbb{I}) = \int_{\mathbb{R}^{n-3}}\!\! d^{n-3}x\, {\rm exp}\left(-G_\alpha (x)\right) \lim_{\epsilon\to 0^+}{\rm exp}\left(-\frac{1}{\epsilon}H_\alpha (x)\right)\,.
\ee
The function resulting from computing the limit is nothing but an indicator function. In general, given two sets $S,U$ such that $S\subset U$, 
\be\label{indicatorDef} 
{\bf 1}_{S}: U\to \{0,1\},\quad {\bf 1}_{S}(x)=\left\{ \begin{array}{cc} 1 & {\rm if~} x \in S, \\ 0 & {\rm ~\, otherwise}. \end{array} \right.
\ee 
In the case at hand, we define the set $S(H_\alpha ) := \{ x\in \mathbb{R}^{n-3} : H_\alpha (x) = 0 \}\subset \mathbb{R}^{n-3}$. This leads to the final formula for the global Schwinger formula,
\be \label{ffS} 
m_n(\alpha ,\mathbb{I}) = \int_{\mathbb{R}^{n-3}}\!\! d^{n-3}x\, {\rm exp}\left(-G_\alpha (x)\right) {\bf 1}_{S(H_\alpha )}(x)\,.
\ee 

\begin{example}\label{ex1}

Consider $\alpha = (1324)$. In this case
\be  
G_\alpha (x) = -s_{23}\, {\rm min}(0,x_1)\,, \quad H_\alpha (x) = x_1-{\rm min}(0,x_1)\,.
\ee 
The set $S(H_\alpha )= \{ x_1 : x_1\leq 0\} = (-\infty , 0]$ and therefore,
\be 
m_4(1324,\mathbb{I}) = \int_{-\infty}^\infty\!\! d x_1  \, {\rm exp}\left(s_{23}\, {\rm min}(0,x_1)\right){\bf 1}_{(-\infty , 0]}(x_1) =  \int_{-\infty}^0\!\! d x_1 \, {\rm exp}\left(s_{23}\, x_1\right) = \frac{1}{s_{23}}\,.
\ee 
\end{example}

Let us present a more interesting example.

\begin{example}\label{ex2}

Consider $\alpha = (123654)$. In this case
\begin{align}
\begin{split}
    G_\alpha (x) = & s_{12} \left(x_1-\min
   \left(0,x_1\right)\right)-s_{23}\, \min \left(0,x_1\right)+s_{45} \left(x_2-\min
   \left(x_2,x_3\right)\right)+\\ & s_{56} \left(x_3-\min
   \left(x_2,x_3\right)\right)+t_{123} \left(\min \left(x_2,x_3\right)-x_1\right), \\
   H_\alpha (x) = & x_1-\min (x_1,x_2,x_3)\,.
\end{split}  
\end{align}
In the expression for $G_\alpha(x)$ we have already used that $x_1\leq \min (x_2,x_3)$ is the condition imposed by requiring $H_\alpha (x)=0$ in order to simplify the expression. The set $S(H_\alpha )= \{ x_1 : x_1\leq \min (x_2,x_3) \}$. In this case it is convenient to write the indicator function as a product of two Heaviside step functions $\theta (x_2-x_1)\theta(x_3-x_1)$ so that 
\be 
m_6(123654,\mathbb{I}) = \int_{\mathbb{R}^3}\!\! d^3x  \, {\rm exp}\left(-G_\alpha(x_1,x_2,x_3)\right)\theta (x_2-x_1)\theta(x_3-x_1)\,. 
\ee 
This integral is easily evaluated to give the expected result
\be 
m_6(123654,\mathbb{I}) = \left(\frac{1}{s_{12}}+\frac{1}{s_{23}}\right)\left(\frac{1}{s_{45}}+\frac{1}{s_{56}}\right)\frac{1}{t_{123}}\,. 
\ee 
\end{example}

\section{From $\phi^3$ Amplitudes to $\phi^4$ Amplitudes} \label{sD}

The positive tropical Grassmannian ${\rm Trop}^+\, G(2,n)$ is the space of all planar metric trees. In other words, trees with vertices of any degree $3\leq d\leq n$ are part of the space. In the previous section, amplitudes of theories where only Feynman diagrams corresponding to binary trees were discussed. At first it might be puzzling that a formula for $m_n(\mathbb{I},\mathbb{I})$ involves an integration over the entire ${\rm Trop}^+\, G(2,n)$. However, this is easily understood by noticing that the regions in ${\rm Trop}^+\, G(2,n)$ which correspond to trees with at least one vertex of degree $d>3$ are of measure zero and do not contribute to the integral. 

In this section we extend the idea used to obtain $m_n(\alpha,\mathbb{I})$ from $m_n(\mathbb{I},\mathbb{I})$ by a limiting procedure in order to obtain a global Schwinger formula for $A^{\phi^4}_n$. The main difference is that while the limiting procedure produced indicator functions leading to  $m_n(\alpha,\mathbb{I})$, here it produces Dirac delta functions that localize the integral to the regions of measure zero where $\phi^4$ planar trees are located. The process unearths a surprising connection to $m_{n/2+1}(\alpha ,\mathbb{I})$ amplitudes. 

\begin{prop}\label{qeri}
Consider the space of kinematic invariant of $n=2m$ massless particles with $t_{[a,b]}= 1/\epsilon$ whenever $b-a\equiv 1\!\! \mod 2$ and let $m_n^{(\epsilon)}(\mathbb{I},\mathbb{I})$ denote $m_n(\mathbb{I},\mathbb{I})$ evaluated on it. Then 
\be 
A^{\phi^4}_n = \lim_{\epsilon\to 0}\frac{1}{(2\epsilon)^{n/2-1}}m_n^{(\epsilon)}(\mathbb{I},\mathbb{I})\,.
\ee 
\end{prop}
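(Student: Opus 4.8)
The plan is to mimic the proof of Proposition~\ref{limit1} but track how fast the contributions of different Feynman diagrams vanish. Begin by recalling that $m_n(\mathbb{I},\mathbb{I})=\sum_{\mathcal{T}\in\mathtt{Pl}_n(\mathbb{I})}\mathcal{R}(\mathcal{T})$ with $\mathcal{R}(\mathcal{T})=\prod_{i=1}^{n-3}1/t_{I_i}$, where the $I_i$ are the $n-3$ internal edges of $\mathcal{T}$. A binary tree $\mathcal{T}$ with $n=2m$ leaves has $n-3=2m-3$ internal propagators. The key combinatorial observation is: if a planar binary tree can be obtained by resolving the trivalent vertices of a planar $\phi^4$ tree (i.e.\ a tree all of whose internal vertices have even valence $4$), then exactly the ``even'' propagators $t_{[a,b]}$ with $b-a\equiv 0\bmod 2$ survive as genuine factors, and precisely $m-1=n/2-1$ of its propagators are ``odd'' ones $t_{[a,b]}$ with $b-a\equiv 1\bmod2$. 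For such a tree $\mathcal{R}(\mathcal{T})$ scales like $\epsilon^{\,n/2-1}$ after the substitution $t_{[a,b]}=1/\epsilon$ for odd invariants, so $\frac{1}{(2\epsilon)^{n/2-1}}\mathcal{R}(\mathcal{T})$ has a finite limit. Every other planar binary tree contains \emph{more} than $n/2-1$ odd propagators (equivalently, cannot be collapsed to a $\phi^4$ tree), hence its rescaled contribution vanishes as $\epsilon\to0$.

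Concretely, I would carry out the steps as follows. First, establish the counting lemma: for a planar binary tree on $n=2m$ leaves, the number of internal edges $I$ with $|I|$ odd is at least $m-1$, with equality if and only if the tree coarsens (by contracting those odd-cardinality edges) to a planar tree with only $4$-valent internal vertices; moreover the remaining $m-2$ edges then have $|I|$ even and are exactly the internal edges of that $\phi^4$ tree. (Here one uses $b-a\equiv 1\bmod 2 \iff |\{a,\dots,b\}|=b-a+1$ even; I should be careful whether the paper's convention means ``$t_{[a,b]}$ with $b-a$ odd'' corresponds to odd or even interval length, and phrase the lemma accordingly — this is just bookkeeping.) Second, write $m_n^{(\epsilon)}(\mathbb{I},\mathbb{I})=\sum_{\mathcal{T}}\mathcal{R}^{(\epsilon)}(\mathcal{T})$ and split the sum into trees $\mathcal{T}$ that do collapse to a $\phi^4$ tree and those that do not. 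Third, for the non-collapsing trees note $\mathcal{R}^{(\epsilon)}(\mathcal{T})=O(\epsilon^{k})$ with $k\ge n/2$, so $\epsilon^{-(n/2-1)}\mathcal{R}^{(\epsilon)}(\mathcal{T})\to0$. Fourth, for each $\phi^4$ planar tree $\mathcal{T}_4$ with internal even-invariants $t_{J_1},\dots,t_{J_{m-2}}$, collect all binary resolutions $\mathcal{T}$ of $\mathcal{T}_4$; each has $\mathcal{R}^{(\epsilon)}(\mathcal{T})=\epsilon^{\,m-1}\prod_{r=1}^{m-2}\frac{1}{t_{J_r}}$ (the odd propagators each contribute exactly one factor $\epsilon$), and one must check that the number of such binary resolutions is exactly $2^{\,m-1}$. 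Multiplying by $\frac{1}{(2\epsilon)^{m-1}}$ gives $\prod_r 1/t_{J_r}$, which is precisely the Schwinger/Feynman value of $\mathcal{T}_4$. Summing over all planar $\phi^4$ trees $\mathcal{T}_4$ yields $A^{\phi^4}_n$.

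The two points that need genuine care — and which I expect to be the main obstacle — are the combinatorial claims in step one and step four: (i) that a binary planar tree has exactly $m-1$ odd-cardinality internal edges \emph{iff} it resolves a $\phi^4$ tree, and strictly more otherwise; and (ii) that each planar $\phi^4$ tree has exactly $2^{m-1}$ planar binary resolutions, with each resolution trading every $4$-valent vertex for a fixed number of new (odd) propagators carrying one power of $\epsilon$. For (ii), a $4$-valent vertex resolves into two trivalent vertices joined by one new edge in $3$ ways in general but only a subset are compatible with planarity; one should count carefully using the fact that a planar $\phi^4$ tree on $n$ leaves has $m-1$ internal vertices and that resolving them independently (in the planar-compatible ways) is what generates the factor — so the count $2^{m-1}$ should be matched against the known ratio $\mathrm{FC}_{m-1}(3,1)/\mathrm{FC}_{m-1}(2,1)$ of numbers of planar cubic trees to planar $\phi^4$ trees, and one verifies this equals $2^{m-1}$ (indeed $\mathrm{C}_{2m-2}/\mathrm{C}_{m-1}$-type identities give the right power of $2$ on average, but one needs it tree-by-tree, which follows from the local resolution count at each $4$-valent vertex). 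Once these two combinatorial facts are in place, the limit computation is immediate termwise, and exchanging the finite sum with the limit is trivial since there are finitely many trees.
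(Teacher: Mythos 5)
Your proposal is correct and follows essentially the same route as the paper's proof: group the planar binary trees by whether they contract to a planar $\phi^4$ tree, use the local fact that each degree-four vertex admits exactly two planar resolutions to get the factor $2^{n/2-1}$, and count that collapsing trees carry exactly $n/2-1$ propagators set to $\epsilon$ while all others carry at least one more. Your parity bookkeeping is right once you note that $b-a\equiv 1\bmod 2$ means $|I|$ even, and your insistence on proving the ``iff'' in the counting lemma (via contracting the even-cardinality edges and observing all resulting vertex degrees are even, hence at most $n/2-2$ surviving edges with equality only for $\phi^4$ trees) is a point the paper merely asserts.
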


\begin{proof}

Consider any Feynman diagram ${\cal T}^{(4)}$ contributing to $A^{\phi^4}_n$, that is, any completely ternary planar tree on $n$-leaves. Such a diagram has $n/2-1$ vertices of degree $4$. The strategy is to find out how many planar binary trees give rise to ${\cal T}^{(4)}$ by collapsing edges, i.e., taking their length to zero. This is easily done by realizing that for each degree-four vertex of ${\cal T}^{(4)}$ there are exactly two ways, compatible with planarity, of growing an edge to produce two degree-three vertices. This means that there are $2^{n/2-1}$ binary trees that give rise to ${\cal T}^{(4)}$. Of course, not all binary trees descend to a ternary diagram. Note that under the kinematics in the proposition, $\phi^3$ Feynman diagrams that collapse to $\phi^4$ diagrams have exactly $n/2-2$ $\epsilon$-independent propagators and $(n-3)-(n/2-2)=n/2-1$ which become $1/t = \epsilon$. Diagrams that do not produce a $\phi^4$ diagram have at least one extra propagator of the form $1/t=\epsilon$. Therefore, in the limit $\epsilon\to 0 $ the $\phi^4$ amplitude is recovered. 
\end{proof}

\subsection{Global Schwinger Formula for $A^{\phi^4}_n$ }

Following the same steps as in section \ref{sC} we start the derivation of the global Schwinger formulation of $A^{\phi^4}_n$ by using Proposition \ref{qeri} and the representation for $m_n(\mathbb{I},\mathbb{I})$ given in \eqref{amp3}, i.e.
\be\label{amp44}  
A^{\phi^4}_n = \lim_{\epsilon\to 0^+}\frac{1}{(2\epsilon)^{n/2-1}} \int_{\mathbb{R}^{n-3}}\!\! d^{n-3}x\, {\rm exp}\left(-\sum_{a< b}t_{[a,b]}(\epsilon)f_{[a,b]}(x)\right)\,.
\ee 
Rewrite
\be
F_n(x,\epsilon ) = \sum_{a< b}t_{[a,b]} f_{[a,b]}(x) = G(x)+\frac{1}{\epsilon}H(x)
\ee 
with
\be\label{Hregions}  
G(x) = \sum_{a<b}^{\rm even}t_{[a,b]}f_{[a,b]}(x)\,,\quad H(x) = \sum_{a<b}^{\rm odd}f_{[a,b]}(x)\,,
\ee 
where the sums are over ordered pairs $(a,b)$ such that 
$b-a\equiv 0\!\mod 2$ (even) or $b-a\equiv 1\!\mod 2$ (odd).

Unlike the cases considered in the previous section, the limit $\epsilon\to 0$ of  $m_n^{(\epsilon)}(\mathbb{I},\mathbb{I})$ does not lead to a finite answer and therefore commuting the limit and the integration in \eqref{amp44} must be carefully defined. 

We take the approach in which
\be\label{distrib}  
 Q(x):= \lim_{\epsilon\to 0^+}\frac{1}{(2\epsilon)^{n/2-1}} \exp \left( -\frac{1}{\epsilon}H(x)\right)
\ee 
is to be treated as a distribution. Since $H(x)\geq 0$, it is clear that $Q(x)$ only has support in regions where $H(x)=0$.

In the next section we show that solutions to $H(x)=0$ are regions of dimension $n/2-1$ in $\mathbb{R}^{n-3}$ which are classified by non-crossing chord diagrams. Here we show the explicit form of $H(x)$,
\begin{align}\label{genH}
H(x) = \sum_{a=0}^{n-3}x_a+2\sum_{a=0}^{n-4}\sum_{b=a+1}^{n-3}(-1)^{b-a}\,\textrm{min}(x_a,x_{a+1},...,x_b)\,.
\end{align} 

Thus, the distribution $Q(x)$ becomes a sum over distributions that localize the integral to the regions. This gives the first form of the global Schwinger formula for the $\phi^4$ theory,
\be\label{amp56}  
A^{\phi^4}_n =  \int_{\mathbb{R}^{n-3}}\!\! d^{n-3}x\, {\rm exp}\left(-\sum_{a<b}^{\rm even}t_{[a,b]}f_{[a,b]}(x)\right)Q(x)\,.
\ee 
In the next section we present some examples that motivate a second version of the formula as a sum over regions labelled by non-crossing chord diagrams.

\section{Computing $\phi^4$ Amplitudes Using the Global Schwinger Formula}\label{examples4}

In this section we illustrate the use of the global Schwinger formula \eqref{amp56} by considering several examples. 

\subsection{Four-Point Amplitude}

The four-particle kinematic space is only two dimensional, $s_{12},s_{23}$. Therefore $G(x)=0$ and
\be 
H(x) = x_1-2\min (0,x_1) = |x_1|.
\ee 
The distribution in the integral is 
\be 
Q(x) = \lim_{\epsilon\to 0^+}\frac{1}{2\epsilon} \exp \left( -\frac{1}{\epsilon}|x_1|\right) = \delta(x_1)\,.
\ee 
This implies that \eqref{amp56} becomes
\be 
A^{\phi^4}_4 =  \int_{\mathbb{R}}\!\! dx_1 \delta(x_1) = 1\,.
\ee 

\subsection{Six-Point Amplitude}

The six-particle kinematic space is nine dimensional, $s_{12},s_{23},\ldots ,s_{61}, t_{[1,3]},t_{[2,4]},t_{[3,5]}$. Evaluating $H(x)$ one finds
\begin{align}
    H(x) = & \, x_1+x_2+x_3-2\min (0,x_1)-2\min (x_1,x_2)-2\min (x_2,x_3)\nonumber \\ & +2\min (0,x_1,x_2)+2\min (x_1,x_2,x_3)-2\min (0,x_1,x_2,x_3)\,.
\end{align} 
Setting $H(x)$ to zero gives rise to two regions,
\be
R_1 = \{ x_1=0,\, x_2 = x_3 \}\,, \quad R_2 = \{ x_1=x_2 > 0, \, x_3 = 0 \}\, .\label{region6}
\ee 
The distribution $Q(x)$ then becomes 
\be 
Q(x) =Q_1(x)+Q_2(x)\,,\,\, {\rm with}\,\, Q_1(x):= \delta(x_1)\delta(x_2-x_3)\,, \,\, Q_2(x) := \theta(x_1)\delta(x_1-x_2)\delta(x_3)\,.
\ee 
Instead of computing \eqref{amp56} as a single object, let us split it by regions
\be\label{our1} 
A^{\phi^4: (1)}_6 =  \int_{\mathbb{R}^3}\!\! d^3 x\exp(-G(x)) Q_1(x) = \frac{1}{t_{123}}+\frac{1}{t_{234}}\,.
\ee 
\be\label{our2} 
A^{\phi^4: (2)}_6 =  \int_{\mathbb{R}^3}\!\! d^3 x\exp(-G(x)) Q_2(x) = \frac{1}{t_{345}}\,.
\ee 
Adding up the two contributions leads to the amplitude
\be  
A^{\phi^4}_4 = \frac{1}{t_{123}}+\frac{1}{t_{234}}+\frac{1}{t_{345}}\,.
\ee 

\subsection{Eight-Point Amplitude}\label{sec:Eight_Point_Amplitude}

The eight-particle kinematic space is twenty dimensional, $s_{12},s_{23},\ldots ,s_{81}$, $t_{123},t_{234},\ldots ,t_{812}$, and $t_{[1,4]},t_{[2,5]},t_{[3,6]},t_{[4,7]}$. Evaluating $H(x)$ using \eqref{genH} one finds five regions:
\begin{align}\label{eight4}
    R_1 = \,& \{ x_1=0,\, x_2 = x_3,\, x_4= x_5 \}\,, \nonumber \\  
    R_2 = \,& \{  x_1=0, \, x_2=x_5,\, x_3=x_4,\, x_2 < x_3 \}\,, \nonumber \\ 
    R_3 = \,& \{ x_5= 0,\, x_1=x_2,\, x_3=x_4,\, x_1 > 0,\, x_3 > 0 \}\,, \nonumber \\ 
    R_4 = \,& \{ x_3 =0,\, x_1=x_2,\, x_4=x_5,\, x_1>0 \} \,,\nonumber \\
    R_5 = \,& \{ x_5 = 0 ,\, x_1=x_4,\, x_2=x_3,\, x_2>x_1>0 \}\,.
\end{align} 
The distribution $Q(x)$ then becomes 
\be 
Q(x) =Q_1(x)+Q_2(x)+Q_3(x)+Q_4(x)+Q_5(x)
\ee
with
\begin{align*}
  Q_1(x):= \,& \delta(x_1)\delta(x_2-x_3)\delta(x_4-x_5)\,, \\  
  Q_2(x):= \,& \theta(x_3-x_2)\delta(x_1)\delta(x_2-x_5)\delta(x_3-x_4)\,, \\
  Q_3(x):= \,& \theta(x_1)\theta(x_3)\delta(x_5)\delta(x_1-x_2)\delta(x_3-x_4)\,, \\
  Q_4(x):= \,& \theta(x_1)\delta(x_3)\delta(x_1-x_2)\delta(x_4-x_5)\,, \\
  Q_5(x):= \,& \theta(x_1)\theta(x_2-x_1)\delta(x_5)\delta(x_1-x_4)\delta(x_2-x_3)\,.
\end{align*}

The contributions from each region are:
\begin{align}\label{qmet}
A^{\phi^4: (1)}_8 = \,&  \frac{1}{t_{123}t_{456}}+\frac{1}{t_{456}t_{781}}+\frac{1}{t_{781}t_{234}}+\frac{1}{t_{234}t_{678}}+\frac{1}{t_{678}t_{123}}\,, \nonumber \\
A^{\phi^4: (2)}_8 = \,&  \frac{1}{t_{567}}\left(\frac{1}{t_{123}}+\frac{1}{t_{234}}\right)\,,\nonumber \\
A^{\phi^4: (3)}_8 = \,&  \frac{1}{t_{812}}\left(\frac{1}{t_{345}}+\frac{1}{t_{567}}\right)\,,\nonumber \\
A^{\phi^4: (4)}_8 = \,&  \frac{1}{t_{345}}\left(\frac{1}{t_{678}}+\frac{1}{t_{781}}\right)\,,\nonumber \\
A^{\phi^4: (5)}_8 = \,&  \frac{1}{t_{456}t_{812}}\,.
\end{align} 
The amplitude $A^{\phi^4}_8$ is the sum over all five contributions and gives rise to the familiar expression in terms of $12$ Feynman diagrams. 

\subsection{One Region for All $n$}\label{oneRG}

In the next section we provide a diagrammatic technique for finding all regions contributing to $A^{\phi^4}_n$. In this last example, we study the contribution from the analog to $R_1$ for all $n$. The region is defined in the following proposition.

\begin{prop}\label{Rone}
The function 
\be  
 H(x) = \sum_{a<b}^{\rm odd}f_{[a,b]}(x)\,,
\ee 
defined in \eqref{Hregions}, vanishes in the region 
\be  
R_1 = \{ x_0 = x_1,\, x_2=x_3,\, x_4=x_5,\ldots ,\, x_{n-4}=x_{n-3}\}  \,.
\ee
\end{prop}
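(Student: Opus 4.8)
The plan is to evaluate $H(x)$ on $R_1$ directly from the closed form \eqref{genH}, so that everything reduces to elementary sign bookkeeping. Since $n$ is even, the index set $\{0,1,\dots,n-3\}$ splits exactly into the $(n-2)/2$ consecutive blocks $\{2k,2k+1\}$ with $k=0,1,\dots,(n-4)/2$, and $R_1$ is precisely the locus where $x_{2k}=x_{2k+1}$ on each block (recall $x_0:=0$). I would introduce block variables $y_k:=x_{2k}=x_{2k+1}$, with $y_0=x_0=x_1=0$. On $R_1$ the linear part of \eqref{genH} is then immediately $\sum_{a=0}^{n-3}x_a=2\sum_{k=0}^{(n-4)/2}y_k$.

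Next I would reorganize the double sum $2\sum_{0\le a<b\le n-3}(-1)^{b-a}\min(x_a,\dots,x_b)$ by grouping the ordered pairs $(a,b)$ according to which blocks their endpoints lie in, i.e.\ by $p:=\lfloor a/2\rfloor$ and $q:=\lfloor b/2\rfloor$. The key observation is that on $R_1$ the value $\min(x_a,\dots,x_b)$ depends only on $(p,q)$: it equals $\min(y_p,y_{p+1},\dots,y_q)$, because the window $x_a,\dots,x_b$ consists precisely of the entries $y_p,\dots,y_q$, each occurring once or twice. Hence inside a fixed $(p,q)$-block one may sum the weights $(-1)^{b-a}$ first, and the elementary computation is: for $p<q$ the block is the four pairs $(2p,2q),(2p,2q+1),(2p+1,2q),(2p+1,2q+1)$ with signs $+1,-1,-1,+1$, which cancel; for $p=q$ the block is the single pair $(2p,2p+1)$, contributing $(-1)\,\min(x_{2p},x_{2p+1})=-y_p$. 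So the double sum collapses to $2\sum_{p=0}^{(n-4)/2}(-y_p)=-2\sum_{k=0}^{(n-4)/2}y_k$, which exactly cancels the linear part, giving $H(x)|_{R_1}=0$.

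I do not expect a genuine obstacle. The only things requiring care are the parity bookkeeping --- checking that for even $n$ every endpoint index in $\{0,\dots,n-3\}$ belongs to a \emph{complete} block $\{2p,2p+1\}$ with both members $\le n-3$, so that no ``half block'' survives near the top of the summation range and spoils the cancellation --- and the consistent use of the convention $x_0=0$. As an optional remark that makes the vanishing more transparent, I would note that a generic point of $R_1$ corresponds under \eqref{tropMin} to a metric caterpillar tree all of whose internal edges partition $[n]$ into consecutive blocks of \emph{even} size; consequently $f_{[a,b]}(x)=0$ for every interval $[a,b]$ with $b-a$ odd \emph{individually}, not merely after summation. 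This yields a second, diagrammatic proof, but the fastest rigorous route is the block computation above.
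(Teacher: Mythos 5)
Your proof is correct, but it takes a genuinely different route from the paper's. The paper argues term by term: since $H(x)$ is a sum of non-negative functions $f_{[a,b]}(x)$ with $b-a$ odd, it suffices to show each one vanishes on $R_1$, which follows from the four-term expression \eqref{defFint} after a parity check showing that either $x_{a-3}=x_{a-2}$ or $x_{b-3}=x_{b-2}$ holds on $R_1$, forcing the four $\min$'s to cancel in pairs. You instead work with the aggregate closed form \eqref{genH}, pass to block variables $y_k=x_{2k}=x_{2k+1}$, and show that the alternating double sum collapses block-by-block (the $+1,-1,-1,+1$ cancellation for $p<q$, the surviving $-y_p$ for $p=q$) against the linear part. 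Your sign bookkeeping checks out, including the point that for even $n$ the top index $n-3$ is the odd member of its block so no half-block survives, and your observation that $\min(x_a,\dots,x_b)$ on $R_1$ depends only on $(\lfloor a/2\rfloor,\lfloor b/2\rfloor)$ is the correct key step. The trade-off: the paper's argument is independent of the identity \eqref{genH} (which the paper states but does not derive), and it yields the stronger, more structural fact that every odd-interval $f_{[a,b]}$ vanishes individually on $R_1$ --- exactly the ``caterpillar tree'' picture you mention in your closing remark, which is in fact the paper's proof rather than an optional alternative. Your version inherits the correctness of \eqref{genH} as a hypothesis, but in exchange it is a purely mechanical cancellation that does not need the non-negativity of the $f_{[a,b]}$ and makes the role of the even block structure of $R_1$ completely transparent.
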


\begin{proof}
Since $H(x)$ is the sum of non-negative functions, $f_{[a,b]}(x)$, we have to show that each such function vanishes on $R_1$. Using the definitions \eqref{defFint}, \eqref{tropMin}, one has 
\begin{align}\label{fefe}
   f_{[a,b]}(x) & = \min (x_{a-2},x_{a-1},\ldots ,x_{b-3})-\min (x_{a-2},x_{a-1},\ldots ,x_{b-3},x_{b-2})\\ \nonumber & -\min (x_{a-3},x_{a-2},\ldots ,x_{b-3}) +\min (x_{a-3},x_{a-2},\ldots ,x_{b-3},x_{b-2})\,.
\end{align}  
By definition, $H(x)$ only contains $ f_{[a,b]}(x)$ with $b-a\equiv 1\! \mod 2$. This means that on $R_1$, either $x_{a-3}=x_{a-2}$ or $x_{b-3}=x_{b-2}$. This is easily seen by considering two cases: If $a\in 2\mathbb{Z}$ then $b\in 2\mathbb{Z}+1$ and therefore $a-3\in 2\mathbb{Z}+1$, $b-3\in 2\mathbb{Z}$, and $x_{b-3}=x_{b-2}$ on $R_1$. The same can be repeated when $a\in 2\mathbb{Z}+1$ to conclude that $x_{a-3}=x_{a-2}$. 

Finally, note that if $x_{a-3}=x_{a-2}$ then the first and third terms in \eqref{fefe} cancel each other while the second and fourth do too. If $x_{b-3}=x_{b-2}$ then the first and second cancel while the third and fourth do too. 
\end{proof}

In order to evaluate the contribution from $R_1$ it is convenient to define the following combination of kinematic invariants,
\be  
r_{ab}:= s_{ab}+s_{a,b+1}+s_{a+1,b}+s_{a+1,b+1}\,.
\ee 
It is also useful to write $r_{ab}$ in terms of planar invariants using \eqref{fourTerms},
\be\label{cami} 
r_{ab} = -t_{[a,b-1]}+t_{[a,b+1]}+t_{[a+2,b-1]}-t_{[a+2,b+1]}\,.
\ee 
Note that if $a\in 2\mathbb{Z}$ and $b\in 2\mathbb{Z}+1$, then all four invariants in \eqref{cami} belong to the set of $\phi^4$ invariants. 

Restricting the tropical potential function \eqref{tropP} to $R_1$ one finds
\be\label{mica} 
F_n(x) = G_n(x) = \sum_{a=2}^{\rm even}\sum_{b=5}^{\rm odd}r_{ab}\min (x_{a-2},x_{a},\ldots ,x_{b-5},x_{b-3}) + \sum_{a=2}^{{\rm even}}t_{a,a+1,a+2}x_{a-2}\,.
\ee 
In sums labeled ``even'' (``odd'') the index only takes even (odd) values. The first equality is due to the fact that on $R_1$ the function $H_n(x)=0$.

The function $G_n(x)$ has exactly the structure of a tropical potential for $m_{n/2+1}(\mathbb{I},\mathbb{I})$ if the labels are identified as $x_a \to x_{a/2}$. This is well-defined since $a$ only takes even values in \eqref{mica}. Instead of using the mapping, we keep the original labels and write the tropical potential for $m_{n/2+1}(\mathbb{I},\mathbb{I})$ as
\be  
F_{n/2+1}^{\phi^3}(x_0,x_2,\ldots ,x_n):= \sum_{a=2}^{\rm even}\sum_{b=5}^{\rm odd}s_{a,b+1}\min (x_{a-2},x_{a},\ldots ,x_{b-5},x_{b-3}) + \sum_{a=2}^{{\rm even}}s_{a,a+2}x_{a-2}\,.
\ee 
Matching the coefficients gives the map of kinematic invariants, 
\be\label{map34} 
r_{ab}=s_{a,b+1}\,,\quad t_{a,a+1,a+2}=s_{a,a+2}\,, \quad a\in \{ 2,4,6\ldots \}\,,\,\, b\in \{5,7,9,\ldots \} \,.
\ee
We conclude that the contribution of region $R_1$ to $A^{\phi^4}_n$ is nothing but $m_{n/2+1}(\mathbb{I},\mathbb{I})$ with kinematic invariants given by \eqref{map34}.

This result prompts the following proposition.

\begin{prop}\label{ampProp}
Consider $A^{\phi^4}_n$ evaluated on the following subspace of kinematic invariants,
\be 
t_{[a,b]} =\frac{1}{\epsilon}\,, \quad a\in \{3,5,\ldots ,n-3\}\,, \,\, b\in\{ a+2,a+4,\ldots ,n-1\} \,.
\ee 
to produce a function $A^{\phi^4(\epsilon)}_n$.
Then,
\be 
\lim_{\epsilon\to 0}A^{\phi^4(\epsilon)}_n = m_{n/2+1}(\mathbb{I},\mathbb{I}) ,
\ee 
for some bijection of the set of planar kinematic invariants.
\end{prop}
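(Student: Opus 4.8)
The plan is to run the same argument as in the proof of Proposition~\ref{limit1}, but applied to the cubic expansion of $A^{\phi^4}_n$, and then to recognise the surviving terms as the contribution of the region $R_1$ computed in section~\ref{oneRG}. Write $A^{\phi^4}_n=\sum_{\mathcal{T}^{(4)}}\mathcal{R}(\mathcal{T}^{(4)})$, the sum over all planar $\phi^4$ trees on $n$ leaves, where $\mathcal{R}(\mathcal{T}^{(4)})=\prod_j 1/t_{I_j}$ is the product over its $n/2-2$ propagators, each a planar invariant $t_{[a,b]}$ with $b-a$ even. Let $\mathcal{E}=\{t_{[a,b]}: a\in\{3,5,\ldots,n-3\},\ b\in\{a+2,a+4,\ldots,n-1\}\}$ be the set of invariants sent to $1/\epsilon$; equivalently, $\mathcal{E}$ is the set of planar $\phi^4$ invariants whose defining interval $\{a,\ldots,b\}$ is a (non-cyclic) subinterval of $\{3,4,\ldots,n-1\}$ with both endpoints odd. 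Any diagram using a propagator in $\mathcal{E}$ contributes a factor of $\epsilon$ for each such propagator and hence vanishes as $\epsilon\to 0$, whereas $\mathcal{R}(\mathcal{T}^{(4)})$ is $\epsilon$-independent for the remaining diagrams. Therefore
\be
\lim_{\epsilon\to 0}A^{\phi^4(\epsilon)}_n=\sum_{\mathcal{T}^{(4)}:\ \text{no propagator in }\mathcal{E}}\mathcal{R}(\mathcal{T}^{(4)})\,.
\ee

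The next step is to show that this surviving sum equals $m_{n/2+1}(\mathbb{I},\mathbb{I})$. Using the explicit form \eqref{mica} of $G_n(x)|_{R_1}$ together with the expansion \eqref{cami} of $r_{ab}$ in planar invariants, one checks by a short parity argument that the planar $\phi^4$ invariants occurring in $G_n(x)|_{R_1}$ are exactly those \emph{not} in $\mathcal{E}$; hence a $\phi^4$ diagram avoids $\mathcal{E}$ precisely when all of its propagators occur in $G_n(x)|_{R_1}$, i.e.\ precisely when its Schwinger cone lies inside $R_1$. Summing the Schwinger integrals over these cones therefore reproduces $\int\exp\!\big(-G_n(x)|_{R_1}\big)$, and section~\ref{oneRG} shows that, after the relabelling $x_a\to x_{a/2}$ and the kinematic identification \eqref{map34}, $G_n(x)|_{R_1}$ is nothing but $F_{n/2+1}^{\phi^3}$, so by the global Schwinger formula \eqref{amp3},
\be
\sum_{\mathcal{T}^{(4)}:\ \text{no propagator in }\mathcal{E}}\mathcal{R}(\mathcal{T}^{(4)})=\int_{\mathbb{R}^{n/2-2}}\!\! d^{n/2-2}x\,\exp\!\big(-F_{n/2+1}^{\phi^3}\big)=m_{n/2+1}(\mathbb{I},\mathbb{I})\,.
\ee
The partial map \eqref{map34} extends to a bijection of the full sets of planar kinematic invariants by using momentum conservation and $t_I=t_{I^{\rm c}}$ to express every planar invariant of the $(n/2+1)$-point kinematics through the $r_{ab}$ and $t_{a,a+1,a+2}$. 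Together with the first step this gives $\lim_{\epsilon\to 0}A^{\phi^4(\epsilon)}_n=m_{n/2+1}(\mathbb{I},\mathbb{I})$.

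I expect the main obstacle to be the combinatorial statement used in the second step — that a planar $\phi^4$ tree all of whose propagators lie outside $\mathcal{E}$ necessarily has its Schwinger cone inside $R_1$ (the reverse inclusion is immediate from \eqref{cami}). A clean way to establish it is to invoke the non-crossing chord-diagram classification of the regions of section~\ref{combi} and check that $R_1$ is the unique region whose associated invariants all avoid $\mathcal{E}$; alternatively one can induct on $n$, splitting off an outermost degree-four vertex and verifying that the condition ``every propagator can be represented by an interval with both endpoints even'' is preserved under the split. A secondary, purely bookkeeping point is to check that \eqref{map34} indeed assembles into a bijection of all planar invariants of $m_{n/2+1}(\mathbb{I},\mathbb{I})$, not just the consecutive Mandelstams, and that the weights $\mathcal{R}$ match term by term under it; as a by-product this re-derives, without using the regions, that the number of surviving $\phi^4$ diagrams equals the Catalan number $\textrm{C}_{n/2-1}$, the number of planar binary trees on $n/2+1$ leaves.
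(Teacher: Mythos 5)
Your proposal is essentially the paper's own proof: reduce everything to the $R_1$ computation of section~\ref{oneRG} and show that every other contribution carries at least one propagator from the set $\mathcal{E}$ sent to $1/\epsilon$ --- and your flagged ``main obstacle'' (surviving diagrams $\Leftrightarrow$ $R_1$) is resolved in the paper exactly by your first suggested route, the non-crossing chord-diagram classification of regions (subsection~\ref{combiProof}), so both arguments are ultimately conditional on Conjecture~\ref{conj1}. One concrete detail to watch if you carry that route out: a surrounding chord $\theta_{ef}$ produces a propagator $1/t_{[e+3,f+2]}$ lying in $\mathcal{E}$ only when $e$ is even and $f$ is odd, which holds automatically for the outermost (top-level) chords but can fail for nested surrounding chords (e.g.\ $\theta_{16}$ gives $t_{[4,8]}\notin\mathcal{E}$), so the vanishing argument must be run on an outermost chord that surrounds another.
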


\begin{proof}
Let us start by proving the spaces of kinematic invariants possess the same cardinality. The set of planar kinematic invariants of $A^{\phi^4}_n$ has cardinality of $n(n-4)/4$. For example, it is $0,3,8,15$ for $n=4,6,8,10$. In the statement of the proposition, $(n-2)(n-4)/8$ of the kinematic invariants are set to $1/\epsilon$ so there are only $(n-4)(n+2)/8$ left. Let us introduce $n_3:=n/2+1$, then $(n-4)(n+2)/8 = n_3(n_3-3)/2$ which is the cardinality of the set of planar kinematic invariants for $m_{n_3 = n/2+1}(\mathbb{I},\mathbb{I})$. 

The rest of the proof is based on the fact that in the kinematic space of interest and the corresponding limit, only region $R_1$ contributes to the amplitude. Since the result from region $R_1$ was shown to be $m_{n_3 = n/2+1}(\mathbb{I},\mathbb{I})$ for some bijetion of the kinematic invariants in this subsection, then the statement of the proposition follows. In order to actually complete the proof, we first need to classify all regions and this is done in the next section. We therefore postpone the completion of the proof to the end of the next section in subsection \ref{combiProof}.

\end{proof}

\section{Combinatorial Description of Regions}\label{combi}

In the previous section we provided some examples of how the global Schwinger formula for $A^{\phi^4}_n$ is evaluated. The result decomposes as a sum over regions (cones) which are in bijection with noncrossing chord diagrams. In this section we provide a systematic study of the structure of the regions. 

The unexpected appearance of $m_{n/2+1}(\mathbb{I},\mathbb{I})$ in the contribution from one of the regions, explained in section \ref{oneRG}, motivates a similar interpretation for the other regions. Indeed, in all examples we have studied we find that all contributions are related to $m_{n/2+1}(\alpha ,\mathbb{I})$ for some choice of ordering $\alpha$. 

In order to make the study systematic, we propose a diagrammatic procedure for finding all the regions that contribute to $A^{\phi^4}_n$ and show how each such region is in bijection with a cubic $m_{n/2+1}(\alpha ,\mathbb{I})$ amplitude.

\subsection{Regions for $A_n^{\phi^4}$: Non-Crossing Chord Diagrams}

Let us start by defining non-crossing chord diagrams in our context. 

\begin{defn}\label{chords}

Place $n-2$ points labeled $0,1, \ldots ,n-3$ in increasing order on the real line. A {\it non-crossing chord} diagram is a perfect matching of the points such that all edges can be drawn as chords on the upper half plane without any crossings. Let us denote the chord connecting points $a$ and $b$ as $\theta_{ab}$.  
\end{defn}
	
\begin{conj}\label{conj1}
The regions contributing to $A^{\phi^4}_n$ are in bijection with the set of all $\textrm{C}_{n/2-1}$ possible non-crossing chord diagrams defined in \ref{chords}. Moreover, the region $R$ corresponding to a particular diagram is obtained as follows:
\begin{itemize}
    \item For each chord $\theta_{ab}$ set $x_a=x_b$.
    \item If a chord $\theta_{ab}$ surrounds another chord $\theta_{cd}$, then $x_a=x_b < x_c=x_d$.
\end{itemize}
In other words, the regions defined by non-crossing chord diagrams are all the solutions to $H(x)=0$.
\end{conj}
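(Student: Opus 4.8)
My plan is to establish the set equality $\{x\in\mathbb R^{\,n-3}:H(x)=0\}=\bigcup_M R_M$, the union running over all non-crossing perfect matchings $M$ of the $n-2$ points $\{0,1,\dots,n-3\}$ and $R_M$ the cone cut out by the two bullet rules, and then to check that $M\mapsto R_M$ is a bijection onto the regions.

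\emph{Step 1 (a pointwise criterion).} Because $H(x)=\sum_{a<b}^{\rm odd}f_{[a,b]}(x)$ with each $f_{[a,b]}$ a non-negative piecewise-linear function (Section~\ref{sec2}), the zero set is $\bigcap_{a<b}^{\rm odd}\{f_{[a,b]}=0\}$. Writing $f_{[a,b]}$ as in \eqref{fefe}, with $A=\min(x_{a-2},\dots,x_{b-3})$ the minimum over the ``interior'' coordinates and $u=x_{a-3}$, $v=x_{b-2}$ the two ``flanking'' ones, the elementary identity
\[
A-\min(A,v)-\min(A,u)+\min(A,u,v)=(A-v)_+-(\min(A,u)-v)_+
\]
(together with $\min(A,u)\le A$) re-proves $f_{[a,b]}\ge 0$ and shows $f_{[a,b]}(x)=0\iff\max(x_{a-3},x_{b-2})\ge\min(x_{a-2},\dots,x_{b-3})$. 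It is convenient to adjoin virtual coordinates $x_{-1}=x_{n-2}=-\infty$; then after relabelling the pieces of $H$ become indexed by the pairs $-1\le p<q\le n-2$ with $q-p$ even, and each obeys $g_{p,q}(x)=0\iff\max(x_p,x_q)\ge\min(x_{p+1},\dots,x_{q-1})$, the boundary pieces (one flank equal to the virtual $-\infty$) being the truncated terms already visible in \eqref{genH}. Hence $H(x)=0$ means exactly that $x$ has \emph{no same-parity valley pair}: no two equal-parity positions both strictly smaller than every coordinate strictly between them.

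\emph{Step 2 ($\bigcup_M R_M\subseteq\{H=0\}$).} Fix a non-crossing $M$, a point $x\in R_M$, and an equal-parity pair $(p,q)$; set $\mu=\min(x_{p+1},\dots,x_{q-1})$. If the $M$-partner of $p$ (or of $q$) lies strictly between $p$ and $q$, then $x_p$ (resp.\ $x_q$) equals one of those interior coordinates, so $\ge\mu$, and we are done. Otherwise, the $q-p-1$ interior positions are odd in number and hence cannot be matched among themselves; so some interior position $c$ has its $M$-partner outside $(p,q)$, and the non-crossing and nesting constraints then force the chord through $c$ to strictly surround the chord through $p$ or the chord through $q$, whence $x_c<x_p$ or $x_c<x_q$ and $\max(x_p,x_q)>x_c\ge\mu$. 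This is precisely the mechanism of Proposition~\ref{Rone}, now carried out for an arbitrary matching (the boundary pieces go through verbatim).

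\emph{Step 3 (the crux: $\{H=0\}\subseteq\bigcup_M R_M$).} I would argue by induction on the number of (virtual-augmented) positions. Given $x$ valley-free, let $v=\max_{0\le i\le n-3}x_i$ and take any maximal run $\{p,p{+}1,\dots,p{+}\ell{-}1\}$ of consecutive positions carrying value $v$. Since the augmented endpoints carry $-\infty<v$, the positions $p-1$ and $p+\ell$ are genuine; if $\ell$ were odd then $(p-1,p+\ell)$ would be an equal-parity pair with both endpoints strictly below the run — a same-parity valley pair — contradicting $H(x)=0$. So every top run has even length. Match each top run into its unique non-crossing system of adjacent pairs, record those chords, and delete the run. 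Deleting an even block of positions preserves all parities, and deleting only top-value positions leaves every relevant interval minimum unchanged, so the reduced configuration is again valley-free with an even number of positions; by the inductive hypothesis it supplies the remaining chords, each of which surrounds the already-placed top-run chords and carries the strictly smaller common value, exactly as the nesting rule demands. Discarding the outermost virtual chord $\{-1,n-2\}$ yields a non-crossing perfect matching $M$ of $\{0,\dots,n-3\}$ with $x\in R_M$.

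\emph{Step 4 (bijection and consequences).} The assignment $M\mapsto R_M$ is injective, since at a generic point of $R_M$ the coincidences $x_a=x_b$ are precisely the chords of $M$ and the nesting is recovered from the strict inequalities; combined with Steps~2--3 this identifies the solution set of $H(x)=0$ with the $\textrm{C}_{n/2-1}$ non-crossing perfect matchings, consistent with all the examples of Section~\ref{examples4}. Identifying the contribution of each region with an $m_{n/2+1}(\alpha,\mathbb I)$ amplitude then follows by repeating, region by region, the restriction of the tropical potential carried out in Section~\ref{oneRG}. I expect the real obstacle to lie entirely in the end-bookkeeping of Steps~1 and~3: confirming that the pieces of $H$ are indexed by \emph{all} equal-parity pairs (including the truncated boundary ones in \eqref{genH}), and that the inductive deletion of top runs stays compatible with those boundary pieces; the interior combinatorics of Steps~2--3 should be robust.
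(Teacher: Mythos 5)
Be aware that the paper does not prove this statement: it is stated as Conjecture~\ref{conj1}, and the text only establishes the special ``unnested'' region $R_1$ (Proposition~\ref{Rone}) and checks the cases $n=4,6,8$ by hand. Your proposal is therefore not reproducing a proof from the paper but supplying one, and after checking it I believe it is essentially correct. Step~1 is right: writing $f_{[a,b]}=(A-v)_+-(\min(A,u)-v)_+$ gives $f_{[a,b]}=0\iff\max(x_{a-3},x_{b-2})\ge\min(x_{a-2},\dots,x_{b-3})$, and the bookkeeping you flag does work out --- the odd-gap terms of $H$ (including the truncated ones in \eqref{genH}, which come from $a=2$ or $b=n$ where a tropical minor vanishes) are in bijection with \emph{all} equal-parity pairs of $\{-1,0,\dots,n-2\}$ once the two $-\infty$ sentinels are adjoined; I verified this matches \eqref{genH} term by term at $n=6$ and the counts ($n(n-2)/4$ on both sides) agree in general. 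Step~2 is a clean generalization of Proposition~\ref{Rone}: the parity argument produces an interior point $c$ matched outside $(p,q)$, and non-crossing then forces the chord through $c$ to nest the chord through $p$ or through $q$, so the nesting inequality delivers $\max(x_p,x_q)>x_c\ge\mu$. Step~3 (the converse, which is the genuinely new content) is sound: maximal top-value runs must have even length or else their two flanking positions form a same-parity valley pair, and deleting max-value positions neither changes parities nor lowers any interval minimum, so the induction closes. Two small points deserve explicit mention in a write-up: (i) in Step~3 you must delete \emph{all} positions attaining the maximum $v$ before recursing, so that every surviving chord carries a value strictly below $v$ and the strict nesting inequality against the top-run chords is automatic; (ii) the induction should be phrased for arbitrary real values at all positions (the normalization $x_0=0$ is irrelevant to the combinatorics and is not preserved by the deletion). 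With those clarifications, your argument upgrades the paper's conjecture to a theorem; the remaining claim in Section~\ref{conMi}, identifying each region's contribution with an $m_{n/2+1}(\alpha,\mathbb I)$ amplitude, is a separate statement that your Step~4 correctly defers.
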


\vskip0.1in

Let us note that the case in which no chord surrounds any other chord corresponds to 
\be 
R = \{ x_0 =x_1,\, x_2=x_3,\, \ldots ,\, x_{n-4}=x_{n-3} \} \,. 
\ee
This is nothing but the region $R_1$ which was proven to set $H(x)=0$ in Proposition \eqref{Rone}. 

\begin{example}

Consider two of the examples presented in section \ref{examples4}. For $n=4$ there is a single chord diagram. It has a single chord $\theta_{01}$ and therefore the region is given by $x_0=x_1$. Recall that $x_0=0$ and so $x_1=0$. 

For $n=6$ there are two non-crossing chord diagrams as shown in figure \ref{fig:n6_regions4}.

\begin{figure}[h!]
	\centering
	\includegraphics[width=0.90\linewidth]{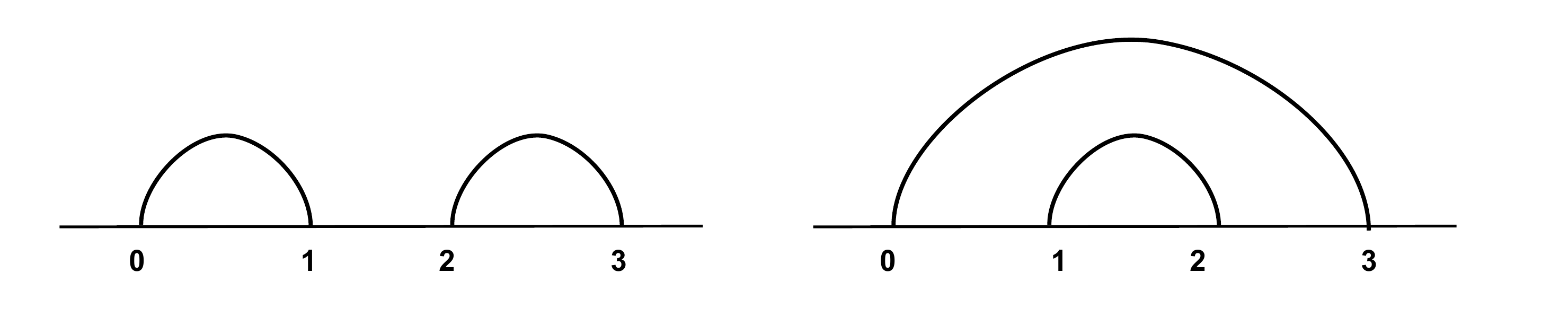}
	\caption{Non-crossing chord diagrams for $n=6$. On the right, the chord $\theta_{03}$ surrounds the chord $\theta_{12}$ and therefore the condition $x_0<x_1$ is imposed.}
	\label{fig:n6_regions4}
\end{figure}

The corresponding regions can be seen to match $R_1$ and $R_2$ in \eqref{region6}, i.e.,
\be
R_1 = \{ x_0 = x_1,\, x_2 = x_3 \}\,, \quad R_2 = \{ x_0=x_3,\, x_1=x_2,\, x_0 < x_1 \}\,. 
\ee 

Finally, we leave as an exercise to the reader to check that the five regions for $n=8$ presented in \eqref{eight4} correspond to the diagrams in figure \ref{fig:n8_regions4}.

\begin{figure}[h!]
	\centering
	\includegraphics[width=0.95\linewidth]{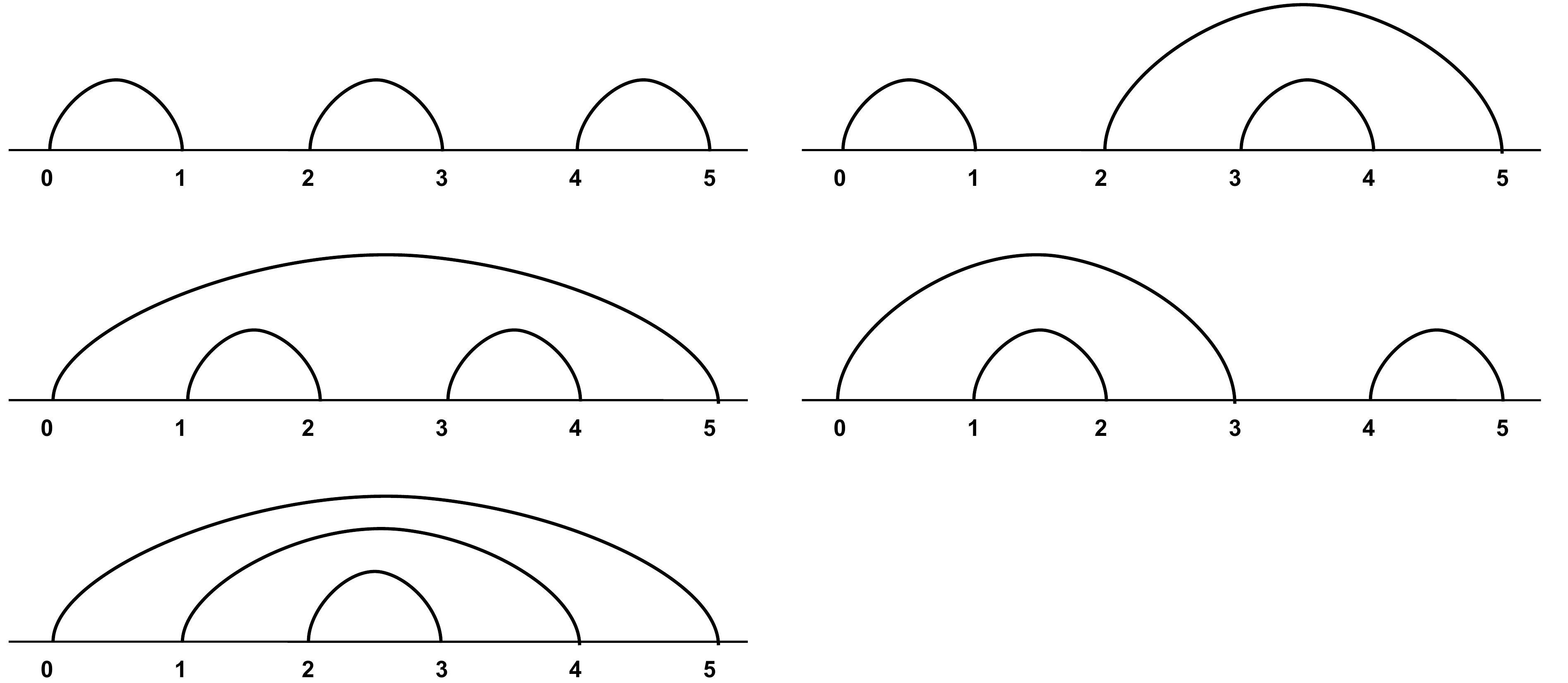}
	\caption{Non-crossing chord diagrams for $n=8$. In the second diagram $\theta_{25}$ surrounds $\theta_{34}$ and therefore $x_2<x_3$. In the third diagram $\theta_{05}$ surrounds both $\theta_{12}$ and $\theta_{34}$ and therefore $x_0<x_1$ and $x_0<x_3$. In the fourth diagram $\theta_{03}$ surrounds $\theta_{12}$ so $x_0<x_1$. Finally, in the fifth diagram $\theta_{05}$ surrounds $\theta_{14}$ which surrounds $\theta_{23}$ so $x_0<x_1<x_2$.}
	\label{fig:n8_regions4}
\end{figure}

\end{example}

In the last example of section \ref{examples4} we found that the contribution to region $R_1$ was computed by a biadjoint $\phi^3$ amplitude with $n/2+1$ particles, i.e., $m_{n/2+1}(\mathbb{I},\mathbb{I})$. The attentive reader might have noticed that in all examples provided so far, the structure of the answer resembles that of $m_{n/2+1}(\alpha ,\mathbb{I})$ for some permutation $\alpha$. We leave the precise connection between $\alpha$ and a region for future work and here we concentrate on the schematic structure of $A^{\phi^4}_n$ for which we have an all $n$ proposal.

\subsection{Connecting Regions to Products of $\phi^3$ Amplitudes: Towards $m_{n/2+1}(\alpha, \mathbb{I})$}\label{conMi}

In order to understand the structure of each region, it is useful to introduce an additional chord to the non-crossing chord diagrams described above. More precisely, we introduce two new points, which could be denoted $-1$ and $n-2$, and we always draw a chord between them. The point $-1$ is located to the left of $0$ and $n-2$ is to the right of $n-3$ so that the chord $\theta_{-1,n-2}$ surrounds the whole diagram. This can be understood as a way of introducing into the figure the fixed particles $1$ and $2$ in the parameterization \eqref{preT}.  

\begin{defn}
An extended non-crossing chord diagram associated to $A^{\phi^4}_n$ is a non-crossing chord diagram on $n$ points labeled by $\{ -1,0,1,2,\ldots ,n-3,n-2 \}$ in which $\theta_{-1,n-2}$ is always included. We also define a meadow of an extended non-crossing chord diagram as any region in the diagram delimited by more than one chord and by the line where the points lie. 
\end{defn}

The claim is that a meadow delimited by $m$ chords and the real line corresponds to a biadjoint ($m+1$)-subamplitude participating in $m_{n/2+1}(\alpha ,\mathbb{I})$. Moreover, we also claim that any chord $\theta_{ab}$ shared by {\it two} meadows corresponds to a propagator in $m_{n/2+1}(\alpha ,\mathbb{I})$ of the form $1/t_{[a+3,b+2]}$. This also fixes the topology of the cubic double-ordered amplitude.

Before describing the consequences of this proposal, let us give some examples to illustrate it.

\begin{example}

Consider the region described by the diagram in figure \ref{n10eg} for $n=10$.

Using the diagram it is easy to recognize the region as
\be 
R = \{ x_0=x_1,\, x_2=x_5, \, x_3=x_4,\, x_6=x_7,\, x_2<x_3 \} \,.
\ee 
Since the green meadow is delimited by 4 chords and the real line then it corresponds to a 5-particle subamplitude of $m_6(\alpha,\mathbb{I})$, while the blue meadow is delimited by 2 chords and the real line and thus corresponds to a 3-particle subamplitude.

The disk diagram on the right is intended to represent the topology of $m_6(\alpha,\mathbb{I})$ given the diagram on the left. The reader familiar with the CHY description of biadjoint partial amplitudes would recognize the disk diagram as encoding the two orderings $\alpha$ and $\mathbb{I}$.   

\begin{figure}[H]
\includegraphics[width=15cm]{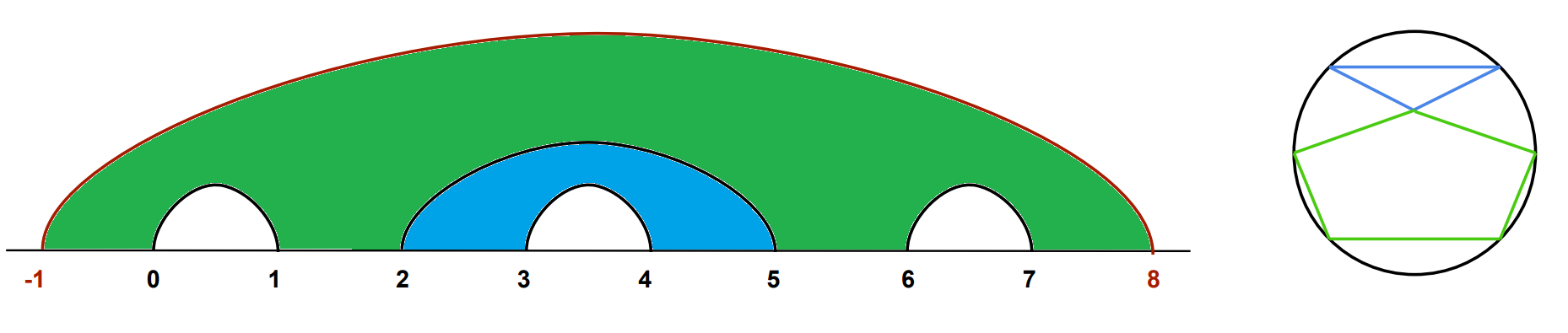}
\centering
\caption{\textit{Left:} An extended non-crossing chord diagram for $n=10$ where the meadows have been coloured. The additional points $-1$ and $8$ together with the chord $\theta_{-1,8}$ joining them are coloured in magenta. \textit{Right:} Disc diagram of an $m_6(\alpha,\mathbb{I})$ amplitude corresponding to the contribution of the region on the left.}
\label{n10eg}
\end{figure}

Finally, notice that the chord $\theta_{25}$ is shared by two meadows, hence it generates the propagator $1/t_{567}$ in $m_6(\alpha,\mathbb{I})$. The conclusion is that the contribution of this region to $A_{10}^{\rm \phi^4}$ is schematically given by
\be 
m_3\times m_5\times \frac{1}{t_{567}}\,.
\ee 

\end{example}

\begin{example}
Consider another region contributing to $A_{10}^{\rm \phi^4}$, described by the diagram in figure \ref{n10egTwo}.
\begin{figure}[H]
\includegraphics[width=15cm]{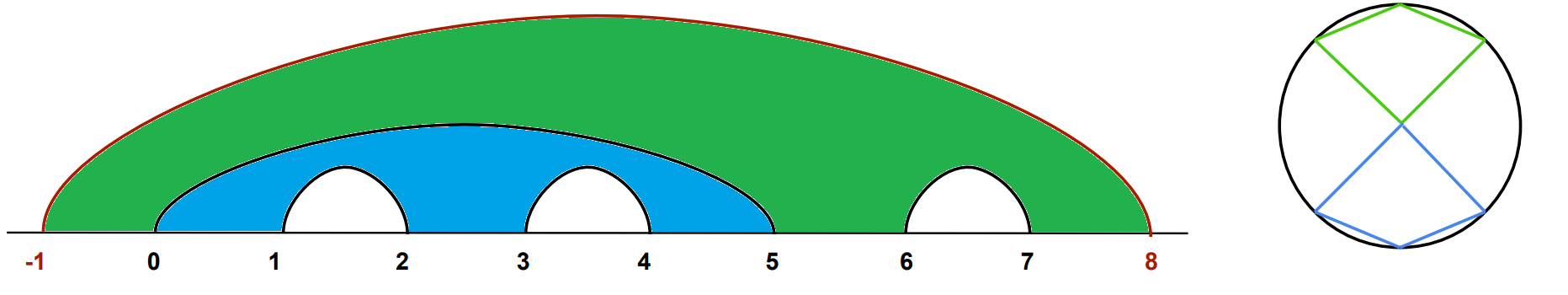}
\centering
\caption{\textit{Left:} An extended non-crossing chord diagram for $n=10$ where the meadows have been coloured. \textit{Right:} Disc diagram of an $m_6(\alpha,\mathbb{I})$ amplitude corresponding to the contribution of the region on the left.}
\label{n10egTwo}
\end{figure}

In this case we have two 4-particle subamplitudes and one propagator of the form $1/t_{[3,7]}$. The contribution of this region to $A_{10}^{\rm \phi^4}$ is schematically given by
\be 
(m_4)^2\times \frac{1}{t_{[3,7]}}\,.
\ee 
Note that $(m_4)^2$ stands for the product of two distinct four-point $\phi^3$ amplitudes. Since we are only interested in the schematic structure, i.e. in the number of amplitudes of a given type, we keep track of that using exponents.
\end{example}

\begin{example}
Consider now a region contributing to $A_{8}^{\rm \phi^4}$. The region is defined by the diagram in figure \ref{n8egR5}. We leave as an exercise for the reader to show that this corresponds to region $R_5$ in the example given in section \ref{sec:Eight_Point_Amplitude}.
\begin{figure}[H]
\includegraphics[width=15cm]{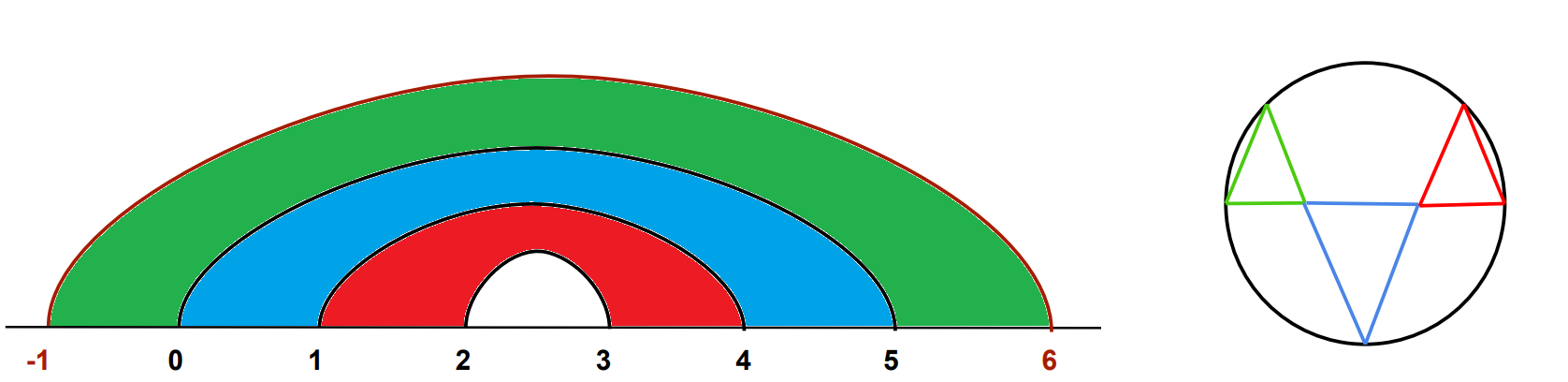}
\centering
\caption{\textit{Left:} An extended non-crossing chord diagram for $n=8$ where the meadows have been coloured. \textit{Right:} Disc diagram of an $m_5(\alpha,\mathbb{I})$ amplitude corresponding to the contribution of the region on the left.}
\label{n8egR5}
\end{figure}
The extended non-crossing chord diagram contains three meadows delimited by two chords (i.e. three 3-particle subamplitudes) and two propagators corresponding to the chords $\theta_{05}$ and $\theta_{14}$. Accoding to our proposal, these propagators are $1/t_{[3,7]}=1/t_{812}$ and $1/t_{456}$, respectively. Once again, the schematic form of the contribution is
\be 
(m_3)^3\times \frac{1}{t_{812}}\times \frac{1}{t_{456}}\,.
\ee 
If we define $m_3:=1$ this is exactly the contribution  $A_8^{\phi^4:(5)}$ presented in \eqref{qmet}.

\end{example}

Based on these and many other examples, we have found a formula that reproduces the schematic structure of $A_{n}^{\rm \phi^4}$ in every case. The formula is based on the Lagrange inversion procedure relating the series expansion of a function $f(x)$ with that of its compositional inverse. We review some related material in appendix \ref{lagrange}. Here we simply present the final form of the proposal. Let
\be 
h(x) =\sum_{i=0}^\infty h_{i} x^i := \sum_{i=0}^\infty m_{i+2} x^i\,,
\ee 
where $m_{i+2}$ represents a generic $(i+2)$-particle amplitude in the biadjoint $\phi^3$ scalar theory of the form $m_{i+2}(\mathbb{I},\mathbb{I})$. Since the mass dimension of $m_{i+2}(\mathbb{I},\mathbb{I})$ is $-2(i-1)$ we are motivated to define $m_2:=P^2$ and $m_3:=1$. Here $1/P^2$ represents a generic propagator. 

The claim is that the schematic form of the amplitude $A^{\phi^4}_n$ is given by
\be\label{propi} 
A^{\phi^4}_n = \left(\frac{2}{n\, h_0^{n/2-1}}\right)\frac{1}{2\pi i}\oint_{|z|=\epsilon}dz\left(\frac{h(z)}{z}\right)^{n/2}\,. 
\ee 

Let us compute the first few cases of \eqref{propi},
\begin{align*}
A^{\phi^4}_4 = &\, h_1  = m_3 \,, \\
A^{\phi^4}_6 = &\, \frac{h_0^2h_2+h_0h_1^2}{h_0^2} = m_4+m_3^2\frac{1}{P^2}\,, \\
A^{\phi^4}_8 = &\, \frac{h_3 h_0^3+3 h_1 h_2 h_0^2+h_1^3 h_0}{h_0^3} = m_5+3m_3m_4\frac{1}{P^2}+m_3^3\left(\frac{1}{P^2}\right)^2\,, \\
A^{\phi^4}_{10} = &\, m_6+ 4 m_5
   m_3\frac{1}{P^2}+2 m_4^2\frac{1}{P^2}+6 m_4 m_3^2\left(\frac{1}{P^2}\right)^2+m_3^4\left(\frac{1}{P^2}\right)^3\,.
\end{align*}

There are several consistency checks that can be done on \eqref{propi}. 

The first is that the number of non-crossing chord diagrams with $n/2-1$ chords is $\textrm{C}_{n/2-1}$. Therefore if we set all $m_r:=1$ so that the contribution from each region is unity, one must find that $A^{\phi^4}_n = \textrm{C}_{n/2-1}$.

This means that we must set 
\be 
h(x) = \sum_{i=1}^\infty x^i = \frac{1}{1-x}\,,
\ee 
and evaluate
\be 
A^{\phi^4}_n =\left(\frac{2}{n}\right)\frac{1}{2\pi i}\oint_{|z|=\epsilon}dz\left(\frac{1}{z(1-z)}\right)^{n/2} = \textrm{C}_{n/2-1}\,. 
\ee 
The last equality follows from the Lagrange inversion formula with $f(x)=x(1-x)$ and $g(x)=x B_2(x)$, where $B_2(x)$ is the generating function of Catalan numbers. 

The second check is that if $A^{\phi^4}_n$ is evaluated on ``planar kinematics'' \cite{Cachazo:2013iea,Early:2019eun,Cachazo:2020wgu}, i.e. on the kinematic point where all planar Mandelstam invariants that participate in $A^{\phi^4}_n$ are unity, $t_{[a,b]}=1$, then $A^{\phi^4}_n$ simply counts the number of planar ternary trees (with all internal vertices of degree four). The numbers are known to be given by the Fuss-Catalan sequence, ${\rm FC}_{n/2-1}(3,1)$. For $n=4,6,8,10$ one has  ${\rm FC}_{n/2-1}(3,1)=1,3,12,55$. This check can be done by realizing that on planar kinematics $m_n=\textrm{C}_{n-2}$ and therefore
\be 
h(x) = \sum_{i=1}^\infty \textrm{C}_i\, x^i =B_2(x) =\frac{1-\sqrt{1-4x}}{2x}\,.
\ee 
As shown in appendix A in \eqref{threeProof}, it is indeed the case that 
\be
{\rm FC}_r(3,1) = \frac{1}{2\pi i}\oint_{|z|=\epsilon} \frac{dz}{r+1}\left(\frac{1-\sqrt{1-4z}}{2z^2}\right)^{r+1}\,,
\ee 
which gives the required relation when $r=n/2-1$.

\subsection{Completing the Proof of Proposition \ref{ampProp}}\label{combiProof}

In order to complete the proof of Proposition \ref{ampProp} we have to show that all regions that contribute to $A^{\phi^4}_n$, except for $R_1$, are ${\cal O}(\epsilon)$ when  
\be 
t_{[a,b]} =\frac{1}{\epsilon}\,, \quad a\in \{3,5,\ldots ,n-3\}\,, \,\, b\in\{ a+2,a+4,\ldots ,n-1\} \,.
\ee 
Recall that $R_1$ is the region corresponding to $n/2-1$ non-crossing chords so that none is surrounded by any other. According to the rules explained in this section, this means that no propagator is generated. One the other hand, every single other region has at least one chord surrounded by another, say $\theta_{ef}$ and therefore there is at least one propagator in the region's contribution to the amplitude. The propagator is $1/t_{[e+3,f+2]}$. Clearly $f-e\geq 3$ so that the chord can contain at least another one. This means that the chords of interest can only have $e\in \{ 0,1,\ldots ,n-6 \}$ and $f\in \{ e+3,e+4,\ldots ,n-3\}$. Therefore each region different from $R_1$ contains at last one propagator of the form $1/t_{[a,b]}$ with $a\in \{ 3,4,\ldots ,n-3 \}$ and $b\in \{ a+2,a+3,\ldots ,n-1\}$. But this is exactly the range of propagators set to $\epsilon$ and this concludes the proof.

\section{From $\phi^3$ Amplitudes to $\phi^p$ Amplitudes}\label{phip}

In this section we extend the limiting procedure used to obtain $A_n^{\phi^4}$ from $m_n(\mathbb{I},\mathbb{I})$ to make a general conjecture for any $A_n^{\phi^p}$ amplitude and its global Schwinger formulation. We also propose a diagrammatic procedure for finding all the regions that contribute to $A_n^{\phi^p}$ and point out a connection with  $m_{(n+2(p-3))/(p-2)}(\alpha,\mathbb{I})$ amplitudes.

To start with, the limiting procedure that generates $A_n^{\phi^p}$ from $m_n(\mathbb{I},\mathbb{I})$ is the following.
\begin{prop}\label{propp}
Consider the region of the kinematic space of $n$ massless particles where $t_{[a,b]}= 1/\epsilon$ whenever $b-a\not\equiv 0\!\! \mod p-2$ and let $m_n^{(\epsilon)}(\mathbb{I},\mathbb{I})$ denote $m_n(\mathbb{I},\mathbb{I})$ evaluated on it. Then 
\be 
A^{\phi^p}_n = \lim_{\epsilon\to 0}\frac{1}{(\textrm{C}_{p-2}\,\epsilon^{p-3})^{\frac{n-2}{p-2}}}m_n^{(\epsilon)}(\mathbb{I},\mathbb{I}) ,\label{aplimit}
\ee 
where $\textrm{C}_m$ is the $m^{th}$ Catalan number.
\end{prop}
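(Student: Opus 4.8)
The plan is to mimic the proof of Proposition~\ref{qeri} (the $\phi^4$ case), which is the $p=4$ specialization of the present statement, generalizing each step from ``degree-four vertices collapsing from two binary resolutions'' to ``degree-$p$ vertices collapsing from $\textrm{C}_{p-2}$ binary resolutions.'' First I would fix a planar $\phi^p$ Feynman diagram ${\cal T}^{(p)}$, i.e. a planar tree on $n$ leaves all of whose internal vertices have degree $p$. A counting of edges shows such a tree has exactly $(n-2)/(p-2)$ internal vertices (this requires $n\equiv 2 \bmod p-2$, otherwise $A_n^{\phi^p}=0$ and there is nothing to prove), and hence $(n-2)/(p-2)-1$ internal edges. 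The core combinatorial input is that a degree-$p$ vertex, when fully resolved into a binary tree compatibly with the cyclic (planar) ordering of its $p$ incident edges, can be resolved in exactly $\textrm{C}_{p-2}$ ways --- this is the standard fact that triangulations of a $(p)$-gon, equivalently binary trees with $p-1$ leaves, are counted by $\textrm{C}_{p-2}$. Each such resolution introduces $p-3$ new internal edges at that vertex.

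Next I would count propagators under the kinematics of the proposition. A binary tree ${\cal T}$ that collapses (by shrinking internal edges to zero length) onto ${\cal T}^{(p)}$ has its $n-3$ internal edges partitioned into: the $(n-2)/(p-2)-1$ edges descending to the genuine internal edges of ${\cal T}^{(p)}$, plus the $(p-3)\cdot(n-2)/(p-2)$ edges internal to the resolved vertices that get collapsed. I would check these add up to $n-3$, which they do. The key observation, exactly as in Proposition~\ref{qeri}, is that the $(n-2)/(p-2)-1$ ``surviving'' edges carry kinematic invariants $t_{[a,b]}$ with $b-a\equiv 0\bmod p-2$ (these are the planar invariants compatible with a $\phi^p$ subdivision, hence the ones \emph{not} set to $1/\epsilon$), while the $(p-3)\cdot(n-2)/(p-2)$ collapsed edges carry invariants with $b-a\not\equiv 0\bmod p-2$, hence equal to $1/\epsilon$. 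Therefore ${\cal R}({\cal T}) = \big(\prod_{\text{surviving}} 1/t_{I_i}\big)\cdot \epsilon^{(p-3)(n-2)/(p-2)}$, i.e. each such binary tree contributes $\epsilon^{(p-3)(n-2)/(p-2)}$ times the Schwinger factor of ${\cal T}^{(p)}$. Since there are $\big(\textrm{C}_{p-2}\big)^{(n-2)/(p-2)}$ binary trees collapsing onto ${\cal T}^{(p)}$ (independent choice of resolution at each of the $(n-2)/(p-2)$ vertices), summing and multiplying by the normalization $\big(\textrm{C}_{p-2}\,\epsilon^{p-3}\big)^{-(n-2)/(p-2)}$ recovers exactly ${\cal R}({\cal T}^{(p)})$ in the limit $\epsilon\to 0$.

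Finally I would argue that binary trees in $m_n^{(\epsilon)}(\mathbb{I},\mathbb{I})$ that do \emph{not} collapse onto any planar $\phi^p$ tree are subleading. Such a tree has at least one internal edge carrying a $t_{[a,b]}$ with $b-a\not\equiv 0\bmod p-2$ that is \emph{not} among a maximal collapsible set --- more precisely, even after collapsing all edges whose invariants are $1/\epsilon$, the resulting tree still has a vertex of degree strictly between $3$ and $p$, or a vertex of degree $>p$; in the former case one shows it inherits strictly more than $(p-3)(n-2)/(p-2)$ factors of $\epsilon$ (because it cannot organize its collapsed edges into complete $\textrm{C}_{p-2}$-resolutions of $\phi^p$ vertices), so after normalization it is ${\cal O}(\epsilon)$ and drops out. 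This ``every non-$\phi^p$ diagram carries a strictly greater power of $\epsilon$'' step is the main obstacle: in the $p=4$ case it was immediate because there are only two vertex degrees to worry about ($3$ and $4$), but for general $p$ one must handle all intermediate degrees and verify the power-counting inequality $\#\{\text{propagators set to }1/\epsilon\} > (p-3)(n-2)/(p-2)$ for every non-collapsible tree. I expect this follows from a clean lemma: among the $n-3$ internal edges of any binary planar tree, the number whose invariant has $b-a\equiv 0\bmod p-2$ is at most $(n-2)/(p-2)-1$, with equality iff the tree collapses onto a $\phi^p$ tree --- which can be proven by induction on $n$ by splitting at an internal edge, or by the combinatorial correspondence with non-crossing partitions developed later in Section~\ref{phip}.
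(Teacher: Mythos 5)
Your proposal is correct and follows essentially the same route as the paper's proof: resolve each degree-$p$ vertex in $\textrm{C}_{p-2}$ planar ways, count the $(n-p)/(p-2)$ surviving versus $(p-3)(n-2)/(p-2)$ collapsed propagators, and argue that non-collapsible binary trees carry at least one extra factor of $\epsilon$. You are in fact somewhat more careful than the paper at the final step (the paper simply asserts that non-$\phi^p$ diagrams have at least one extra $1/t=\epsilon$ propagator, whereas you isolate the needed counting lemma and flag that it requires an argument for general $p$), but the overall strategy is identical.
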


\begin{proof}
The proof is analogous to that of Proposition \ref{qeri}. First, consider any Feynman diagram ${\cal T}^{(p)}$ of $A^{\phi^p}_n$, that is, any $(p-1)$-ary planar tree on $n$-leaves. Such a diagram has $(n-2)/(p-2)$ vertices of degree $p$. The strategy is again to find out how many planar binary trees give rise to ${\cal T}^{(p)}$ by collapsing edges, i.e., taking their length to zero. This is easily done by realizing that for each degree-$p$ vertex of ${\cal T}^{(p)}$ there are exactly $\textrm{C}_{p-2}$ ways, compatible with planarity, of growing a tree to produce $\textrm{C}_{p-2}$ degree-three vertices. This means that there are $\textrm{C}_{p-2}^{(n-2)/(p-2)}$ binary trees that give rise to ${\cal T}^{(p)}$. Of course, not all binary trees descend to a $(p-1)$-ary diagram. Note that under the kinematics in the proposition, $\phi^3$ Feynman diagrams that collapse to $\phi^p$ diagrams have exactly $(n-p)/(p-2)$ $\epsilon$-independent propagators and therefore $(n-3)-(n-p)/(p-2)=(p-3)(n-2)/(p-2)$ which become $1/t = \epsilon$. Diagrams that do not produce a $\phi^p$ diagram have at least one extra propagator of the form $1/t=\epsilon$. Therefore, in the limit $\epsilon\to 0 $ the $\phi^p$ amplitude is recovered.
\end{proof}

As in section \ref{sD}, one can write the global Schwinger formula for $A_n^{\phi^p}$ as a single integral 
\be
A_n^{\phi^p}=\int_{\mathbb{R}^{n-3}}d^{n-3}x\,\textrm{exp}\left(-\sum_{a<b}^{{\cal K}_p}t_{[a,b]}f_{[a,b]}(x)\right)Q(x)\,,\label{GSp}
\ee
where ${\cal K}_p$ means that the sum is over ordered pairs $(a,b)$ such that $b-a\equiv 0\!\! \mod p-2$. Here $Q(x)$ is defined as

\be
Q(x):=\lim_{\epsilon\to 0^+}\frac{1}{(\textrm{C}_{p-2}\,\epsilon^{p-3})^{\frac{n-2}{p-2}}}\,\textrm{exp}\left(-\frac{1}{\epsilon}H(x)\right)
\ee
with
\be
H(x)=\sum_{a<b: (a,b)\notin {\cal K}_p}f_{[a,b]}(x)\, .\label{Hp}
\ee
Note that due to the non-negativity of $H(x)$, $Q(x)$ only has support in regions where $H(x)=0$. Again, the distribution $Q(x)$ becomes a sum over distributions that localize the integral to these regions. This means that equation \eqref{GSp} can also be understood as a sum over regions, where these as associated to diagrams as explained in the next subsection.

\subsection{Combinatorial Description of Regions}

In this subsection we conjecture that the solutions of $H(x)=0$ are regions of dimension $n/(p-2)-1$ in $\mathbb{R}^{n-3}$ which are classified by non-crossing $(p-2)$-chord diagrams. The definition of non-crossing $(p-2)$-chord diagrams, in our context, is the following.

\begin{defn}\label{kchords}

Place $n-2$ points labeled $0,1, \ldots ,n-3$ on the real line in increasing order. A {\it non-crossing $(p-2)$-chord} diagram is a perfect matching of the points such that each matching involves $(p-2)$ points joined by a $(p-2)$-chord and drawn on the upper half plane without any crossings. Let us denote the $(p-2)$-chord connecting points $a_1,a_2, \ldots ,a_{p-2}$ as $\theta_{a_1,a_2,...,a_{p-2}}$ (for general k-chord diagrams see e.g. \cite{kChords}.) 
\end{defn}
	
\begin{conj}\label{conjkchords}
The regions contributing to $A^{\phi^p}_n$ are in bijection with the set of all $\textrm{FC}_{(n-2)/(p-2)}(p-2,1)$\footnote{Recall that $\textrm{FC}_m(q,r)$ is the Fuss-Catalan number given by $$\textrm{FC}_m(q,r)\equiv \frac{r}{mq+r}{mq+r \choose m}\,.$$ Note that for $q=2$ and $r=1$ the Fuss-Catalan numbers coincide with the Catalan numbers, i.e. $\textrm{FC}_m(2,1)=\textrm{C}_m$.} possible non-crossing $(p-2)$-chord diagrams. Moreover, the region $R$ corresponding to a particular diagram is obtained as follows:
\begin{itemize}
    \item For each $(p-2)$-chord $\theta_{a_1,a_2,...,a_{p-2}}$ set $x_{a_1}=x_{a_2}=\cdots=x_{a_{p-2}}$.
    \item If a $(p-2)$-chord $\theta_{a_1,a_2,...,a_{p-2}}$ surrounds another $(p-2)$-chord $\theta_{b_1,b_2,...,b_{p-2}}$, then $x_{a_1}=x_{a_2}=\cdots=x_{a_{p-2}} < x_{b_1}=x_{b_2}=\cdots=x_{b_{p-2}}$.
\end{itemize}
In other words, the regions defined by the non-crossing $(p-2)$-chord diagrams are all the solutions to $H(x)=0$, where $H(x)$ is given by \eqref{Hp}, and the sum of their contributions produces all the $\textrm{FC}_{(n-2)/(p-2)}(p-1,1)$ trees of $\phi^p$.
\end{conj}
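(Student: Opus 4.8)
The plan is to establish Conjecture \ref{conjkchords} in three stages, generalizing the argument structure of Section \ref{sD} from $p=4$ to arbitrary $p$. First, I would verify that each non-crossing $(p-2)$-chord diagram, together with the ordering rules in the statement, genuinely produces a solution of $H(x)=0$. By \eqref{Hp}, $H(x)$ is a sum of the non-negative functions $f_{[a,b]}(x)$ over pairs with $b-a\not\equiv 0\bmod(p-2)$, so it suffices to show each such $f_{[a,b]}$ vanishes on the region $R$ attached to a given diagram. Using the four-minimum form \eqref{fefe} of $f_{[a,b]}$, the key observation—exactly as in the proof of Proposition \ref{Rone}, but done per-diagram rather than only for $R_1$—is that the matching structure forces two of the four $\min$-arguments lists to coincide whenever $b-a\not\equiv 0\bmod(p-2)$: the points $a-3,\ldots,b-2$ are partitioned into blocks of size $p-2$ by the chords restricted to $R$ (chords surrounding smaller ones are strictly larger, so do not interfere with the $\min$), and the residue condition guarantees an endpoint of the interval falls strictly inside a block, collapsing the corresponding pair of $\min$'s. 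This gives $f_{[a,b]}(x)=0$ and hence $H(x)=0$ on $R$.

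Second, I would show the converse: every solution of $H(x)=0$ arises from such a diagram, i.e. the chord-diagram regions exhaust the zero locus. The cleanest route is a dimension-and-counting argument combined with the limiting procedure of Proposition \ref{propp}. On one hand, each $(p-2)$-chord-diagram region has the claimed dimension $n/(p-2)-1$ (the $(n-2)/(p-2)$ chords impose, within each block of $p-2$ points, $p-3$ independent equalities, i.e. $(p-3)(n-2)/(p-2)$ constraints on $\mathbb{R}^{n-3}$, leaving $(n-3)-(p-3)(n-2)/(p-2)=(n-2)/(p-2)-1$ — note $x_0=0$ already fixed). On the other hand, by Proposition \ref{propp} the limit $\epsilon\to 0$ of $m_n^{(\epsilon)}(\mathbb{I},\mathbb{I})$, suitably normalized, reproduces $A_n^{\phi^p}$, whose Feynman diagrams are the $(p-1)$-ary planar trees counted by $\mathrm{FC}_{(n-2)/(p-2)}(p-1,1)$; each such tree is hit by exactly $\mathrm{C}_{p-2}^{(n-2)/(p-2)}$ binary trees, and the normalization $(\mathrm{C}_{p-2}\epsilon^{p-3})^{(n-2)/(p-2)}$ exactly cancels the divergence. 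Since $Q(x)$ is supported on $\{H=0\}$ and each maximal-dimension piece of $\{H=0\}$ must contribute a finite piece of the amplitude, the support decomposes into finitely many top-dimensional polyhedral cones; matching them to the $(p-2)$-chord diagrams is then a bijection between two sets of the same cardinality $\mathrm{FC}_{(n-2)/(p-2)}(p-2,1)$ once I show the map diagram $\mapsto$ region is injective and lands in $\{H=0\}$ (the latter from stage one). Injectivity follows because the region determines which coordinates are forced equal and the surrounding relations, hence the nesting forest of the matching.

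Third, I would confirm the tree-count claim — that summing the contributions over all regions produces the $\mathrm{FC}_{(n-2)/(p-2)}(p-1,1)$ trees of $\phi^p$. This is most naturally done via the connection to $m_{(n+2(p-3))/(p-2)}(\alpha,\mathbb{I})$ amplitudes announced in this section: restricting the tropical potential to each region $R$, the function $G(x)=\sum^{\mathcal{K}_p}t_{[a,b]}f_{[a,b]}(x)\big|_R$ acquires the structure of a tropical potential for a lower-point cubic amplitude (as happens for $R_1$ in \eqref{mica}–\eqref{map34}), and the number of cubic Feynman diagrams summed over all regions telescopes, via the Fuss–Catalan identity $\sum_{\text{diagrams}}(\text{cubic trees per region})=\mathrm{FC}_{(n-2)/(p-2)}(p-1,1)$, to the total ternary-tree count; equivalently one can appeal to the $\phi^p$ analogue of the generating-function computation \eqref{propi} in Appendix \ref{lagrange}. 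I expect the main obstacle to be stage two — proving that the chord-diagram regions are \emph{all} the solutions of $H(x)=0$ and that they are exactly top-dimensional, with no lower-dimensional spurious components and no region left uncounted; the per-diagram vanishing in stage one and the bookkeeping in stage three are routine once the combinatorial dictionary (blocks of $p-2$, nesting forest, residues mod $p-2$) is set up carefully.
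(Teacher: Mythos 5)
First, a point of comparison that matters: the paper does not prove this statement. It is stated as Conjecture \ref{conjkchords} and is supported only by worked examples ($n=10$, $p=6$ and $n=11$, $p=5$) and by consistency checks on the schematic generating-function formulas, so there is no proof in the paper to measure yours against; what you have written is an outline of a possible proof and must stand on its own. Your stage one is the most tractable part and is the right generalization of Proposition \ref{Rone}, but the cancellation mechanism you invoke is not quite the one that works. In Proposition \ref{Rone} the vanishing of $f_{[a,b]}$ in \eqref{fefe} comes from an \emph{adjacent} equality $x_{a-3}=x_{a-2}$ or $x_{b-3}=x_{b-2}$, which makes two pairs of argument lists literally identical. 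For a nested diagram (already $R_2=\{x_0=x_3,\,x_1=x_2,\,x_0<x_1\}$ at $p=4$, $n=6$) the forced equalities are between non-adjacent coordinates, so no two of the four $\min$-lists coincide; one must instead argue about the \emph{values} of the minima, showing that appending $x_{a-3}$ (or $x_{b-2}$) cannot lower either relevant minimum because the chord through that point either lies inside the interval or, by the nesting inequalities, carries a value at least as large. That is where the residue condition $b-a\not\equiv 0\bmod(p-2)$ actually enters, and your sketch does not yet supply this argument. (As a side check, your own dimension count gives $(n-2)/(p-2)-1$, which is correct and integer-valued, whereas the figure $n/(p-2)-1$ you quote from the text is not even an integer for general $p$.)

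The genuine gap is stage two, as you anticipate, and the specific failure is that the counting argument is circular. Proposition \ref{propp} tells you that the normalized $\epsilon\to0$ limit reproduces $A_n^{\phi^p}$, i.e.\ that the \emph{total} contribution of the support of $Q(x)$ is the correct sum over $(p-1)$-ary trees; it does not tell you how many top-dimensional components that support has, that every component is of chord-diagram type, or that distinct components contribute disjoint sets of trees. Without those facts, ``the total is right, my diagram-indexed regions inject into the support, and the cardinalities I want agree'' does not force the map to be onto: there could in principle be additional components of $\{H=0\}$ whose contributions rearrange the same total. Closing this requires an independent combinatorial determination of the zero locus of $H$ in \eqref{Hp} — for instance an induction that locates the argmin of $(x_0,\dots,x_{n-3})$ on any solution, shows it must be attained on a set of $p-2$ points forming an allowed outermost chord, and peels it off. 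Stage three has a similar logical dependency: the meadow/propagator dictionary of section \ref{genpSec} that you use to convert regions into products of cubic amplitudes is itself only a claim in the paper, so it cannot yet be cited as established input for the tree count $\textrm{FC}_{(n-2)/(p-2)}(p-1,1)$.
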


\begin{example}
Consider the $n=10$ amplitude for $\phi^6$. There are four non-crossing $4$-chord diagrams and are shown in figure \ref{n10p6}.

\begin{figure}[H]
\includegraphics[width=10cm]{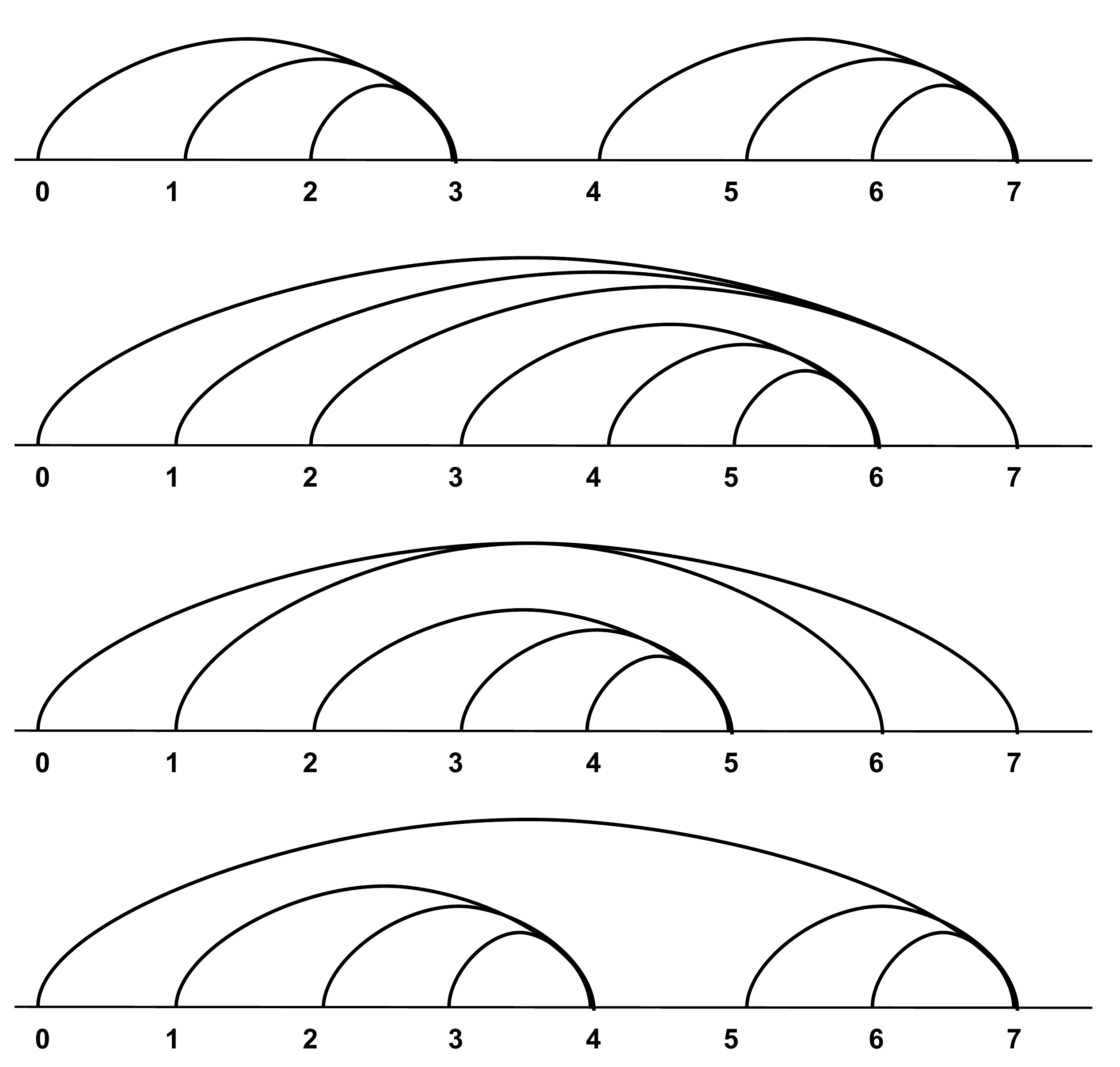}
\centering
\caption{All possible non-crossing 4-chord diagrams for $n=10$ and $p=6$. Each diagram contains exactly two 4-chords. In the top diagram one 4-chord joins points $0,1,2,3$ while the second 4-chord joins $4,5,6,7$. In the second diagram the 4-chord $\theta_{0127}$ surrounds $\theta_{3456}$. In the third, $\theta_{0167}$ surrounds $\theta_{2345}$. In the last diagram, $\theta_{0567}$ surrounds $\theta_{1234}$.}
\label{n10p6}
\end{figure}

Reading from top to bottom and recalling that $x_0=0$, the four regions generated by these diagrams correspond, respectively, to

\begin{align*}
R_1= & \, \{x_1=x_2=x_3=0,\, x_4=x_5=x_6=x_7\}\,, \\
R_2= & \, \{x_1=x_2=x_7=0,\, x_3=x_4=x_5=x_6,\, x_3>0\}\,, \\
R_3= & \, \{x_1=x_6=x_7=0,\, x_2=x_3=x_4=x_5,\, x_2>0\}\,, \\
R_4= & \, \{x_5=x_6=x_7=0,\, x_1=x_2=x_3=x_4,\, x_1>0\}\,. 
\end{align*}
Therefore, the distribution $Q(x)$ is given by 
$$Q(x)=Q_1(x)+Q_2(x)+Q_3(x)+Q_4(x)$$
with
\begin{align*}
Q_1:= & \, \delta(x_1)\delta(x_2)\delta(x_3)\delta(x_4-x_5)\delta(x_5-x_6)\delta(x_6-x_7)\,, \\
Q_2:= & \, \theta(x_3)\delta(x_1)\delta(x_2)\delta(x_7)\delta(x_3-x_4)\delta(x_4-x_5)\delta(x_5-x_6)\,, \\
Q_3:= & \, \theta(x_2)\delta(x_1)\delta(x_6)\delta(x_7)\delta(x_2-x_3)\delta(x_3-x_4)\delta(x_4-x_5)\,, \\
Q_4:= & \, \theta(x_1)\delta(x_5)\delta(x_6)\delta(x_7)\delta(x_1-x_2)\delta(x_2-x_3)\delta(x_3-x_4)\,.
\end{align*}
The contributions from each region are
\begin{align}
\nonumber A_{10}^{\phi^6:(1)} = & \, \frac{1}{t_{[1,5]}}+\frac{1}{t_{[2,6]}}\,, \hspace{5mm} A_{10}^{\phi^6:(2)} =  \, \frac{1}{t_{[5,9]}}\,,  \\
A_{10}^{\phi^6:(3)} = & \, \frac{1}{t_{[4,8]}}\,, \hspace{18
mm} A_{10}^{\phi^6:(4)} =  \, \frac{1}{t_{[3,7]}}\,.
\end{align}
The amplitude $A_{10}^{\phi^6}$ is the sum over all $\textrm{FC}_2(4,1)=4$ contributions and gives rise to the familiar expression with $\textrm{FC}_2(5,1)=5$ Feynman diagrams.

\end{example}

\begin{example}
Consider now the $n=11$ case for $\phi^5$. There are 12 non-crossing $3$-chord diagrams and are represented in figure \ref{n11p5}.

\begin{figure}[H]
\includegraphics[width=15cm]{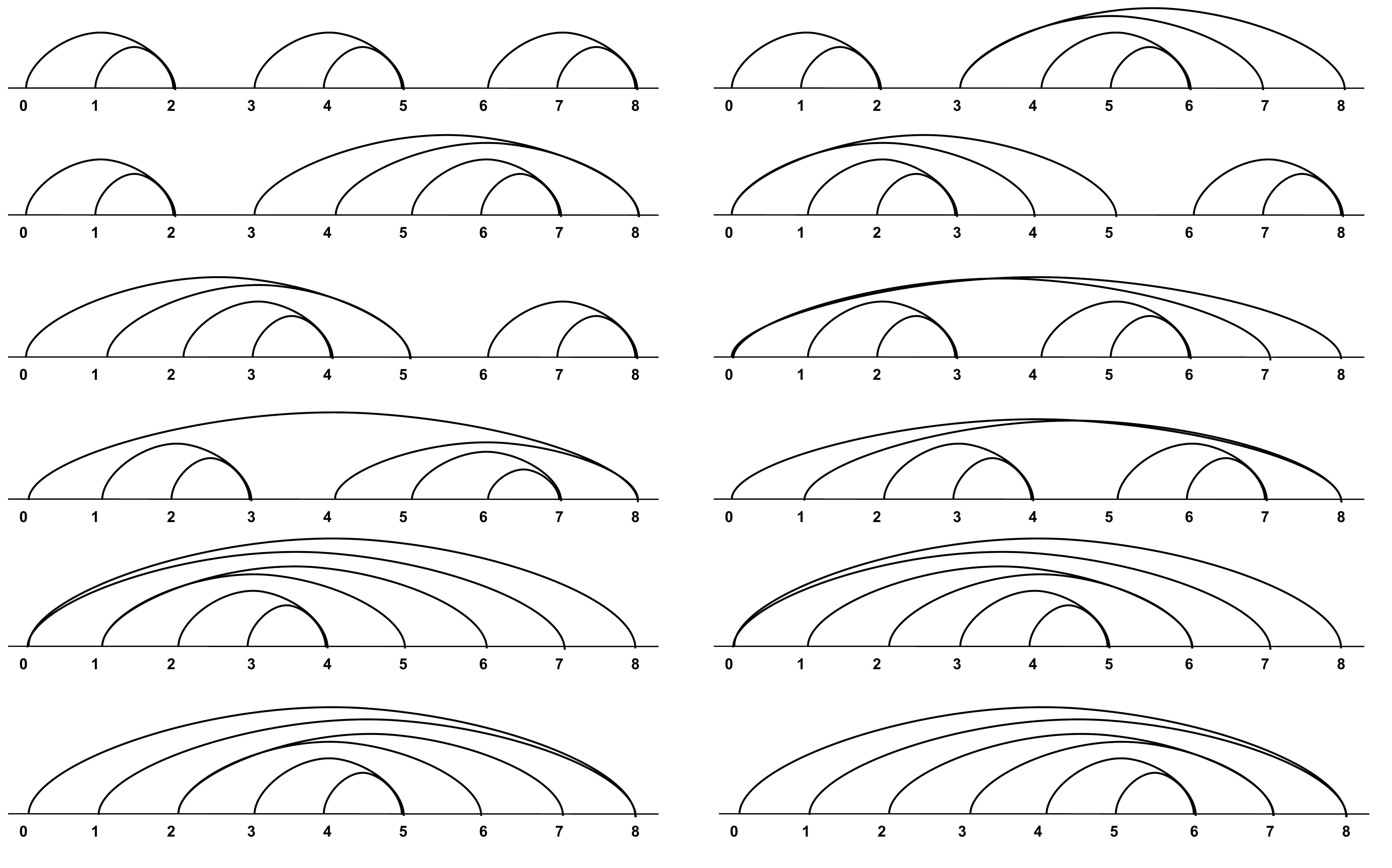}
\centering
\caption{All possible non-crossing 3-chord diagrams for $n=11$ and $p=5$.}
\label{n11p5}
\end{figure}

Reading from left to right and top to bottom, the 12 regions generated by these diagrams correspond, respectively, to

\begin{align*}
R_1= & \, \{x_1=x_2=0,\, x_3=x_4=x_5,\, x_6=x_7=x_8\}\,, \\
R_2= & \, \{x_1=x_2=0,\, x_3=x_7=x_8,\, x_4=x_5=x_6,\, x_3<x_4\}\,, \\
R_3= & \, \{x_1=x_2=0,\, x_5=x_6=x_7,\, x_3=x_4=x_8,\, x_3<x_5\}\,, \\
R_4= & \, \{x_4=x_5=0,\, x_1=x_2=x_3,\, x_6=x_7=x_8,\, x_1>0\}\,, \\
R_5= & \, \{x_1=x_5=0,\, x_2=x_3=x_4,\, x_6=x_7=x_8,\, x_2>0\}\,, \\
R_6= & \, \{x_7=x_8=0,\, x_1=x_2=x_3,\, x_4=x_5=x_6,\, x_1>0,\, x_4>0\}\,, \\
R_7= & \, \{x_4=x_8=0,\, x_1=x_2=x_3,\, x_5=x_6=x_7,\, x_1>0,\, x_5>0\}\,, \\
R_8= & \, \{x_1=x_8=0,\, x_2=x_3=x_4,\, x_5=x_6=x_7,\, x_2>0,\, x_5>0\}\,, \\
R_9= & \, \{x_7=x_8=0,\, x_1=x_5=x_6,\, x_2=x_3=x_4,\, x_2>x_1>0\}\,, \\
R_{10}= & \, \{x_7=x_8=0,\, x_1=x_2=x_6,\, x_3=x_4=x_5,\, x_3>x_1>0\}\,, \\
R_{11}= & \, \{x_1=x_8=0,\, x_2=x_6=x_7,\, x_3=x_4=x_5,\, x_3>x_2>0\}\,, \\
R_{12}= & \, \{x_1=x_8=0,\, x_2=x_3=x_7,\, x_4=x_5=x_6,\, x_4>x_2>0\}\,.
\end{align*}
We leave as an exercise to the reader to find the distributions associated to these regions and to show that the contributions from each region are:
\begin{align}
\nonumber A_{11}^{\phi^5:(1)} = & \, \frac{1}{t_{[2,5]}t_{[8,11]}}+\frac{1}{t_{[2,5]}t_{[2,8]}}+\frac{1}{t_{[5,8]}t_{[2,8]}}+\frac{1}{t_{[5,8]}t_{[5,11]}}+\frac{1}{t_{[8,11]}t_{[5,11]}}\,, \\
\nonumber A_{11}^{\phi^5:(2)} = & \, \frac{1}{t_{[6,9]}}\left(\frac{1}{t_{[2,5]}}+\frac{1}{t_{[5,11]}}\right)\,, \hspace{5mm} A_{11}^{\phi^5:(3)} = \, \frac{1}{t_{[7,10]}}\left(\frac{1}{t_{[2,5]}}+\frac{1}{t_{[5,11]}}\right)\,, \\
\nonumber A_{11}^{\phi^5:(4)} = & \, \frac{1}{t_{[3,6]}}\left(\frac{1}{t_{[8,11]}}+\frac{1}{t_{[2,8]}}\right)\,, \hspace{5mm} A_{11}^{\phi^5:(5)} = \, \frac{1}{t_{[4,7]}}\left(\frac{1}{t_{[8,11]}}+\frac{1}{t_{[2,8]}}\right)\,, \\
\nonumber A_{11}^{\phi^5:(6)} = & \, \frac{1}{t_{[3,9]}}\left(\frac{1}{t_{[3,6]}}+\frac{1}{t_{[6,9]}}\right)\,, \hspace{6mm} A_{11}^{\phi^5:(7)} = \, \frac{1}{t_{[7,10]}t_{[3,6]}}\,, \\
\nonumber A_{11}^{\phi^5:(8)} = & \, \frac{1}{t_{[4,10]}}\left(\frac{1}{t_{[4,7]}}+\frac{1}{t_{[7,10]}}\right)\,, \hspace{4mm} A_{11}^{\phi^5:(9)} = \, \frac{1}{t_{[4,7]}t_{[3,9]}}\,, \\
A_{11}^{\phi^5:(10)} = & \, \frac{1}{t_{[5,8]}t_{[3,9]}}\,, \hspace{5mm} A_{11}^{\phi^5:(11)} = \, \frac{1}{t_{[5,8]}t_{[4,10]}}\,, \hspace{5mm} A_{11}^{\phi^5:(12)} = \, \frac{1}{t_{[6,9]}t_{[4,10]}}\,.
\end{align}
The amplitude $A_{11}^{\phi^5}$ is the sum over all $\textrm{FC}_3(3,1)=12$ contributions and gives rise to the familiar expression with $\textrm{FC}_3(4,1)=22$ Feynman diagrams.

\end{example}

\vskip0.1in

From these examples note that even for $p>4$ the structure of the contribution of each region also resembles that of a cubic amplitude. In particular, it has the structure of $m_{(n+2(p-3))/(p-2)}(\alpha,\mathbb{I})$ for some permutation $\alpha$. Here we will only concentrate on the schematic structure of $A_n^{\phi^p}$ for all $n$, leaving again the precise connection between $\alpha$ and the region to future work.

\subsection{From Regions to Products of $\phi^3$ Amplitudes: Towards $m_{(n+2(p-3))/(p-2)}(\alpha,\mathbb{I})$}\label{genpSec}

As in the $p=4$ case, in order to understand the structure of each region it is useful to introduce an additional $(p-2)$-chord to the non-crossing $(p-2)$-chord diagrams from Definition \ref{kchords}. This is done by adding $p-2$ new points labelled $-p+3,-p+4,\dots,-1$ and $n-2$ so that the new set of points is $\{-p+3,-p+4,\ldots-1,0,1,2,\ldots,n-3,n-2\}$ and points are located in increasing order on the real line\footnote{In fact, any choice where at least one point is on the left of 0 and one point is on the right on $n-3$ is valid.}.

\begin{defn}
An extended non-crossing $(p-2)$-chord diagram is a non-crossing $(p-2)$-chord diagram on $n$ points labelled by $\{-p+3,-p+4,\ldots-1,0,1,2,\ldots,n-3,n-2\}$ in which $\theta_{-p+3,-p+4,...,-1,n-2}$ is always included. We also define a meadow of an extended non-crossing $(p-2)$-chord diagram as any region in the diagram delimited by more than one $(p-2)$-chord and by the line where the points lie.
\end{defn}

From now on we will abuse notation and use $\theta_{ab}$ to refer to the unique path in a $(p-2)$-chord joining two points $a$ and $b$. Therefore, the general claim is that a meadow delimited by $m$ such paths and the real line corresponds to a biadjoint ($m+1$)-subamplitude participating in $m_{(n+2(p-3))/(p-2)}(\alpha ,\mathbb{I})$. We also claim that the upper boundary of a meadow, $\theta_{ab}$, corresponds to a propagator in $m_{(n+2(p-3))/(p-2)}(\alpha ,\mathbb{I})$ of the form $1/t_{[a+3,b+2]}$, with the exception of the pair $\{a,b\}=\{-1,n-2\}$. This also fixes the topology of the cubic double-ordered amplitude.

Let us again give some examples to illustrate the proposal.

\begin{example}
Consider the extended non-crossing 4-chord diagram of $\phi^6$ for $n=14$ shown in figure \ref{n14p6}.

\begin{figure}[H]
\includegraphics[width=15.1cm]{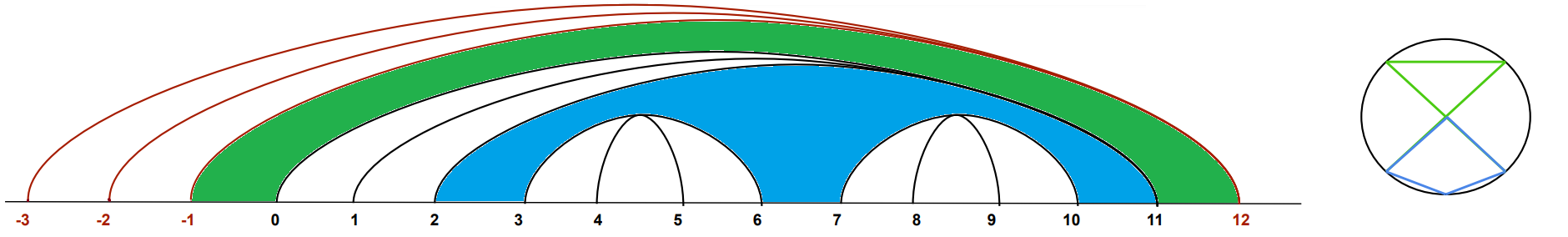}
\centering
\caption{\textit{Left:} An extended non-crossing chord diagram of $\phi^6$ for $n=14$ where the meadows have been coloured. \textit{Right:} Disc diagram of an $m_5(\alpha,\mathbb{I})$ amplitude corresponding to the contribution of the region on the left.}
\label{n14p6}
\end{figure}
In this extended diagram the points $-3$, $-2$, $-1$ and $12$ together with the 4-chord $\theta_{-3,-2,-1,12}$ that joins them are coloured in magenta. One can see that there are two meadows coloured in green and blue. The green meadow is delimited by the real line and by 2 paths $\theta_{-1,12}$ and $\theta_{0,11}$, thus it corresponds to a 3-point subamplitude appearing in $m_5(\alpha,\mathbb{I})$. Similarly, the blue meadow is delimited by 3 paths $\theta_{2,11}$, $\theta_{36}$ and $\theta_{7,10}$ and the real line, thus it corresponds to a 4-point subamplitude of $m_5(\alpha,\mathbb{I})$. The upper boundary of the blue meadow is $\theta_{2,11}$ and this means that there is a propagator of the form $1/t_{[5,13]}$. The upper boundary of the green meadow is of the form $\theta_{-1,n-2}$ and it does not generate a propagator. Therefore, the schematic form of the contribution is 

$$m_3\times m_4\times\frac{1}{t_{[5,13]}}\,.$$

\end{example}

\begin{example}
Now consider another extended non-crossing 4-chord diagram of $\phi^6$ for $n=14$ shown in figure \ref{n14p6v2}.

\begin{figure}[H]
\includegraphics[width=15.1cm]{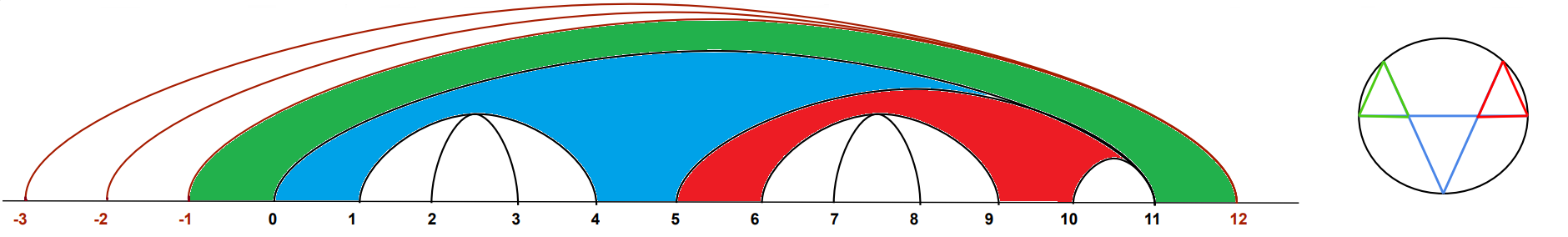}
\centering
\caption{\textit{Left:} An extended non-crossing chord diagram of $\phi^6$ for $n=14$ where the meadows have been coloured. \textit{Right:} Disc diagram of an $m_5(\alpha,\mathbb{I})$ amplitude corresponding to the contribution of the region on the left.}
\label{n14p6v2}
\end{figure}
As in the previous example, one can see that there are three meadows coloured in green, blue and red. The green meadow is delimited by the real line and by 2 paths $\theta_{-1,12}$ and $\theta_{0,11}$ and gives rise to a 3-point subamplitude appearing in $m_5(\alpha,\mathbb{I})$. Similarly, the blue meadow is delimited by the real line and 2 paths $\theta_{05}$ and $\theta_{14}$, thus it corresponds to a 3-point subamplitude of $m_5(\alpha,\mathbb{I})$. Likewise, the red meadow is delimited by the real line and 2 paths $\theta_{5,10}$ and $\theta_{69}$ and gives rise to a 3-point subamplitude of $m_5(\alpha,\mathbb{I})$. We also have two propagators of the form $1/t_{[3,7]}$ and $1/t_{[8,12]}$. Therefore, the schematic form of the contribution is 

$$(m_3)^3\times \frac{1}{t_{[3,7]}}\times \frac{1}{t_{[8,12]}}\,.$$

\end{example}

\subsection{Schematic Structure of $A_n^{\phi^p}$}

Before proposing a Lagrange inversion-like formula to reproduce the schematic structure of $A_n^{\phi^p}$, we present more examples:
\begin{align}\label{connT}
\nonumber A_8^{\phi^5} = & \,m_4+2m_3^2\frac{1}{P^2}\,, \quad
\nonumber A_{11}^{\phi^5}=  \,m_5+6m_3m_4\frac{1}{P^2}+5m_3^3\left(\frac{1}{P^2}\right)^2, \\
\nonumber A_{14}^{\phi^5}= & \,m_6+4m_4^2\frac{1}{P^2}+8m_5m_3\frac{1}{P^2}+28m_4m_3^2\left(\frac{1}{P^2}\right)^2+14m_3^4\left(\frac{1}{P^2}\right)^3, \\
\nonumber A_{10}^{\phi^6}= & \,m_4+3m_3^2\frac{1}{P^2}\,, \quad
\nonumber A_{14}^{\phi^6}=  \,m_5+9m_3m_4\frac{1}{P^2}+12m_3^3\left(\frac{1}{P^2}\right)^2, \\
\nonumber A_{12}^{\phi^7}= & \,m_4+4m_3^2\frac{1}{P^2}\,, \\
 A_{14}^{\phi^8}= & \,m_4+5m_3^2\frac{1}{P^2}\,.
\end{align}

Let us make a proposal for the all $n$ structure of $A^{\phi^p}_n$ amplitudes in terms of biadjoint cubic amplitudes and then perform the same consistency check as done for $\phi^4$.  

The proposal is motivated by the fact, proven in appendix \ref{lagrange}, that $f(x)=x/B_{k-1}(x)$ and $g(x)=x B_k(x)$ are compositional inverses of each other if $B_r(x)$ is the generating function of the Fuss-Catalan numbers $\textrm{FC}_m(r,1)$. This led us to propose a recursive structure in which we define
\be 
h_3(x):=\sum_{i=0}^\infty m_{i+2} x^i\,
\ee 
and 
\be\label{hRec} 
h_k(x)= \sum_{j=0}^\infty h_{k,j}x^j := \sum_{j=0}^\infty  \frac{1}{2\pi i}\oint_{|z|=\epsilon} \frac{dz}{j+1}\left(\frac{h_{k-1}(z)}{z}\right)^{j+1} x^j\,.
\ee 
The structure of the $A^{\phi^p}_n$ amplitude is then given by 
\be 
A^{\phi^p}_n = \frac{h_{p,(n-2)/(p-2)}}{h_0^{(p-3)(n-2)/(p-2)}}\,.
\ee 

The first consistency check is that the number of non-crossing $(p-2)$-chord diagrams with $(n-2)/(p-2)$ chords is $\textrm{FC}_{(n-2)/(p-2)}(p-2,1)$. Therefore if one sets all $m_{i+2}:=1$ so that the contribution from each region is unity, one must find that $A^{\phi^p}_n =\textrm{FC}_{(n-2)/(p-2)}(p-2,1)$.

In section \ref{conMi} we showed that setting
\be 
h_3(x) = \sum_{i=1}^\infty x^i = \frac{1}{1-x}=B_1(x)\,,
\ee 
turns $h_4(x)$ into the generating function of the numbers $\textrm{FC}_m(3,1)$. Iterating the procedure one finds that $h_k(x)$ turns into the generating function of the numbers $\textrm{FC}_m(k-1,1)$. 

The second check is evaluating $A^{\phi^p}_n$ on planar kinematics so that $A^{\phi^p}_n$ counts the number of unrooted planar $(p-1)$-ary trees (with all internal vertices of degree $p$). The numbers are known to be given by the Fuss-Catalan sequence, ${\rm FC}_{(n-2)/(p-2)}(p-1,1)$. This check can again be done by realizing that on planar kinematics $m_n=\textrm{C}_{n-2}$ and therefore
\be 
h_3(x) = \sum_{i=1}^\infty \textrm{C}_i\, x^i =B_2(x) =\frac{1-\sqrt{1-4x}}{2x}\,.
\ee 
Iterating one finds that $h_k(x) = B_{k-1}(x)$, the generating function of the Fuss-Catalan numbers $\textrm{FC}_m(k-1,1)$ as required. 

In appendix \ref{lagrange} we provide several examples that illustrate the iteration procedure and the resulting formulas for $A_n^{\phi^p}$.

%(see OEIS entry A045623,  \cite{oeis})

\section{Discussions}\label{discs}

In this work we have extended the global Schwinger formulation to all partial amplitudes $m_n(\alpha,\beta)$, and also to amplitudes in $\phi^p$ theories. $A_n^{\phi^p}$ is given as a sum over regions, each of which is proposed to be in bijection with a $\phi^3$ biadjoint partial amplitude. This leads to the statement that $A_n^{\phi^p}$ amplitudes can be understood as a sum of products of cubic amplitudes. 

A very simple diagrammatic procedure for listing all regions contributing to an amplitude was found in terms of non-crossing k-chord diagrams. Given one such diagram, we have provided an algorithm for determining the structure of the contribution in terms of $\phi^3$ amplitudes (meadows) and propagators (``frontiers'' separating meadows). Every meadow can be seen to be related to a cubic amplitude participating in $m_{(n+2(p-3))/(p-2)}(\alpha,\mathbb{I})$. Our identification so far is lacking a direct way of determining the permutation $\alpha$ from the non-chord diagram. It would also be very important to find a purely combinatorial method to determine the precise bijection between the set of planar kinematic invariants in each object.

Our main focus has been on a combinatorial prescription for $\phi^p$ amplitudes. However, it would be interesting to find a diagrammatic procedure, in the lines of that for $\phi^p$ amplitudes, to determine the regions that compute $m_n(\alpha,\beta)$ so that $H_{\alpha}(x)=0$ in \eqref{aux1}. 

We end this work with three topics for future research.

\subsection{Relation to Green Functions in Planar Theories}

The standard way of computing Green functions, $G_n$, from connected Green functions, $G_n^c$, is via an exponentiation procedure. However, it is well-known that in planar theories this does not work \cite{Brezin:1977sv}. This is because planarity forces points of the Green function $G_n(x_1,x_2,\ldots ,x_n)$, to be on the boundary of a disk and a  connected Green function for points in a subset $J\subset \{ x_1,x_2,\ldots ,x_n\}$ can be thought of as cutting the disk into regions so that one of them only contains the points in $J$. Having done this, another connected Green function can only be constructed from the pieces left, and so on. In \cite{Brezin:1977sv}, a theory that only admits Green functions with an even number $n=2q$ of points was considered and gave rise to the following combinatorial problem: in how many ways can $2q$ points on a disk be clustered in non-overlapping sets so that there are $r_1$ pairs, $r_2$ quadruplets, etc. As shown in \cite{Brezin:1977sv} this is solved by the coefficients of the formula relating Green functions 
\be\label{greenFS} 
G_{2q} = \sum_{r_i\geq 0}\delta_{q,\left(\sum_{i}i\, r_i\right)}\frac{(2q)!}{(2q+1-\sum_i r_i)!} \frac{\left(G_2^c\right)^{r_1}}{r_1!}\frac{\left(G_4^c\right)^{r_2}}{r_2!}\ldots \frac{\left(G_{2q}^c\right)^{r_{q}}}{r_{q}!}\,.
\ee 
Here the Kronecker delta guarantees that each of the $2q$ points participates in each term. 

Up to this point, this discussion seems to be completely independent of the formulas found in this work. In order to see the connection, let us list the first few cases as done in eq. (31) of \cite{Brezin:1977sv},
\begin{align*}
    G_4 = &\, G_4^c + 2\left(G_2^c\right)^2, \\
    G_6 = &\, G_6^c + 6\, G_4^c G_2^c+ 5\left(G_2^c\right)^3, \\
    G_8 = &\, G_8^c + 4\left(G_4^c\right)^2+ 8\, G_6^cG_2^c+ 28\, G_4^c\left(G_2^c\right)^2+14\left(G_2^c\right)^4\,.
\end{align*} 
Comparing to the expressions for $\phi^5$ amplitudes in \eqref{connT}, i.e. 
\begin{align*}
A_8^{\phi^5} = & \,m_4+2m_3^2\frac{1}{P^2}\,, \\
A_{11}^{\phi^5}= & \,m_5+6m_3m_4\frac{1}{P^2}+5m_3^3\left(\frac{1}{P^2}\right)^2\,, \\
A_{14}^{\phi^5}= & \,m_6+4m_4^2\frac{1}{P^2}+8m_5m_3\frac{1}{P^2}+28m_4m_3^2\left(\frac{1}{P^2}\right)^2+14m_3^4\left(\frac{1}{P^2}\right)^3\,,
\end{align*}
it is clear that there must be a relation. The fact that the coincidence of the structure continues to all multiplicities is shown using the Lagrange inversion formula in appendix \ref{lagrange}.

It is natural to expect that the relation  extends to all $\phi^p$ amplitudes as follows. Let $m=p-3$, and place $mq$ points on a disk. Now count all possible ways of clustering the points in non-overlapping sets so that there are $r_1$ groups of $m$ points each, $r_2$ groups of $2m$ points, etc. Then the formula that relates $A^{\phi^p}_n$ amplitudes and Green functions is given by the natural generalization of \eqref{greenFS},
\be\label{newGreen} 
G_{mq} = \sum_{r_i\geq 0}\delta_{q,\left(\sum_{i}i\, r_i\right)}\frac{(mq)!}{(mq+1-\sum_i r_i)!} \frac{\left(G_m^c\right)^{r_1}}{r_1!}\frac{\left(G_{2m}^c\right)^{r_2}}{r_2!}\ldots \frac{\left(G_{mq}^c\right)^{r_{q}}}{r_{q}!}\,.
\ee 
We leave it as an exercise for the reader to check that the coefficients we have presented in the text and the ones in appendix \ref{lagrange} are indeed the correct values of the combinatorial problem and the coefficients in \eqref{newGreen}.

It would be very interesting to explore this connection further, in particular to matrix models with $\Phi^{p-1}$ interactions as the one studied in \cite{Brezin:1977sv}.

\subsection{Possible Connection with Stokes Polytopes}

Recent work on the computation of $\phi^p$ amplitudes as a sum over contributions obtained from various polytopes known as accordiohedra is very reminiscent of the structures we have uncovered using the global Schwinger formulation. Developing a connection between the two approaches is certainly an important problem. Here we restrict to $\phi^4$ amplitudes and therefore to Stokes polytopes in order to point out some possible directions. Most of the formulations using Stokes polytopes construct the amplitudes as (see e.g equation (5) of \cite{Raman:2019utu})
\be\label{origS}  
A^{\phi^4}_n = \sum_{{\rm Symmetry:}\, \sigma}\,\,\sum_{{\rm Primitive:}\, P} \alpha_P\, m_{P,n}^{(\sigma.P)}
\ee 
where the sum is over all primitive Stokes polytopes and the symmetry classes into which they fall. The $m_{P,n}^{(\sigma.P)}$ are the contributions obtained from the corresponding polytope. Here the $\alpha_P$ are the so-called {\it weights}, which are in general rational numbers. 

Consider for example, 
\be 
A^{\phi^4}_6 = \alpha_1\left(\frac{1}{X_{1,4}}+\frac{1}{X_{3,6}} \right)+ \alpha_2\left(\frac{1}{X_{2,5}}+\frac{1}{X_{1,4}} \right)+\alpha_3\left(\frac{1}{X_{3,6}}+\frac{1}{X_{2,5}} \right)\,.
\ee 
Here there are three polytopes and the weights have to be chosen to be $\alpha_a = 1/2$.

In \cite{Salvatori:2019phs}, Salvatori and Stanojevic propose a way to simplify \eqref{origS} by reducing the redundancy by taking certain limits of kinematic invariants in each term. Let us rewrite Eq. 4.8 of \cite{Salvatori:2019phs} for $n=6$,
\be\label{ss6} 
A^{\phi^4}_6 = \left(\frac{1}{X_{1,4}}+\frac{1}{X_{3,6}} \right)+\lim_{X_{1,4}\to \infty} \left(\frac{1}{X_{2,5}}+\frac{1}{X_{1,4}} \right)\,.
\ee 
In this formula, the first bracket comes from the Stokes polytope with reference $1,4$ while the second bracket comes from the reference $2,5$. Here $X_{i,j}$ can be identified with the planar invariants $t_{[a,b]}$ in a simple way. Note that \eqref{ss6} groups the three terms in the same way as that found in our construction \eqref{our1} and \eqref{our2} coming from the two possible non-crossing chord diagrams. In \cite{Salvatori:2019phs}, the $n=8$ amplitude is also computed. The amplitude is given as a sum over five Stokes polytopes. Our formula \eqref{qmet} also has five regions. However, while our regions all contribute with a factor of one, Eq. 4.10 of \cite{Salvatori:2019phs} has four terms with coefficient $+1$ and one with $-1$. In fact, only the first region can be matched directly; it coincides with the first polytope, i.e. the one with no limits and which gives rise, in our language, to $m_5(\mathbb{I},\mathbb{I})$. We suspect that there exist other combinations with different limits which could match our formula term by term. One hint is that every one of our terms is isomorphic to either an associahedron or to intersections of two associahedra.

\subsection{Towards Generalized $\phi^p$ Amplitudes}

Another intriguing feature of our procedure for constructing the regions from the extended non-crossing chord diagrams for $\phi^4$ is the introduction of two additional points ($-1$ and $n-2$) and a chord joining them. The relevance of this additional chord lies in the way each meadow is associated to a cubic amplitude. For now we have conceived these diagrams simply as combinatorial objects, but if one attempts to relate each of the labels $0,1,\ldots,n-3$ in the diagram to the particles $3,4,\ldots,n$ respectively, then the new chord $\theta_{-1,n-2}$ has the interpretation of identifying particles 1 and 2, which in their tropicalized variables are set to $-\infty$ and $+\infty$. However, one has to be careful in that the variable $x_{a-3}$ coming from the parameterization \eqref{preT} does not exactly correspond to a single particle $a$ as it appears in all rows $r\geq a$. It might seem puzzling that for a general value of $p$, we introduce $p-2$ additional points to the non-crossing $(p-2)$-chord diagrams and join them with another $(p-2)$-chord. Strikingly, $p-2$ is precisely the number of particles with tropicalized variables set at infinity that appear in the higher-$k$ version of the global Schwinger parameterization using $\textrm{Trop}^+G(k,n)$ for $k=p-2$ (see \cite{Cachazo:2020wgu} for its construction). It would be interesting to explore if there is a connection with these generalized objects and CEGM generalized amplitudes \cite{Cachazo:2019ngv}.

In fact, one direction to tackle is to try and find an analog of $\phi^p$ amplitudes for higher-$k$ theories, using a similar limiting procedure. One possible direction is the following. In \cite{Borges:2019csl,Cachazo:2019xjx} it was shown that \textit{planar arrays} of Feynman diagrams computed higher-$k$ amplitudes, also known as CEGM amplitudes \cite{Cachazo:2019ngv}. An element in the planar array ${\cal A}_{i_1,i_2\ldots ,i_{k-2}}$ is an $(n-k+2)$-particle cubic Feynman diagram with particle labels $\{ 1,2,\ldots ,n\}\setminus \{i_1,i_2\ldots ,i_{k-2}\}$. By introducing the notation ${\cal I}:= \{i_1,i_2\ldots ,i_{k-2}\}$ one can parameterize the higher-$k$ invariants as \cite{Cachazo:2019xjx,Early:2020hap}
\be
\sfs_{i_1i_2\ldots i_k}:= \sum_{{\cal I}\cup \{j_1,j_2\}=\{i_1,i_2,\ldots ,i_k\}} s_{j_1,j_2}^{({\cal I})}\,,
\ee
where the sum is over all possible ways of decomposing $\{i_1,i_2,\ldots ,i_k\}$ into two sets of $k-2$ and $2$ elements respectively. For example, for $k=3$ we have
\be\label{clu}
\sfs_{ijk} := s^{(i)}_{jk} + s^{(j)}_{ki} + s^{(k)}_{ij}\,.
\ee
This means that we can write each kinematic invariant appearing in an element in the array in terms of the planar invariants of that particular Feynman diagram, and then apply the familiar limiting procedure to it. After this we obtain the rank-$k$ kinematic invariants which are sent to $1/\epsilon$.

As an example, consider $m^{(3)}_7(\mathbb{I},\mathbb{I})$. Each element in an array, or when $k=3$ also known as a collection, is a 6-particle binary tree. The proposed limiting procedure to get a $p=4$ object imposes that kinematic invariants of the form $s^{(a)}_{bc}$ are sent to $1/\epsilon$ if $\{ b,c\}$ are consecutive in the ordering $(1,2,\ldots ,\hat a,\ldots, n)$. 

In this case we have that, e.g., $\{s^{(1)}_{23}, s^{(1)}_{34},\ldots , s^{(1)}_{67}, s^{(1)}_{72}, s^{(2)}_{13}, s^{(2)}_{34}, \ldots ,s^{(2)}_{67}, s^{(2)}_{71}\}$ are sent to $1/\epsilon$. Notice that after the limiting procedure each of the 693 collections will either end up being of order ${\cal O}(\epsilon^4)$ or higher. Therefore one can define $A^{\phi^4}_{3,7}$ as  the coefficient of $\epsilon^4$ in a series expansion around $\epsilon\to 0$.

Rather than starting with the CEGM generalized amplitude, a better understood object is the set of PK amplitudes introduced by Early in \cite{Early:2021tce}. We leave the study of this construction to future research.

\section*{Acknowledgements}

The authors thank N. Early for helpful discussions and comments on the draft, and for identifying the regions that define the support of the distribution $Q(x)$ with polyhedral cones. We also thank N. Kalyanapuram for bringing some references about Stokes polytopes and accordiohedra to our attention. FC thanks S. Mizera, G. Zhang, and K. Yeats for discussions related to constructions of CHY formulations of $\phi^p$ amplitudes. This research was supported in part by a grant from the Gluskin Sheff/Onex Freeman Dyson Chair in Theoretical Physics and by Perimeter Institute. Research at Perimeter Institute is supported in part by the Government of Canada through the Department of Innovation, Science and Economic Development Canada and by the Province of Ontario through the Ministry of Colleges and Universities.

\appendix

\section{Lagrange Inversion Formula and Fuss-Catalan Numbers}\label{lagrange}

Given a function $f(x)$ that admits a series expansion around $x=0$ and $f(0)=0$ while $f'(0)\neq 0$, the Lagrange inversion formula gives a series expansion for the compositional inverse of $f(x)$, i.e. for a function $g(x)$ such that $g(f(x))=x$, in terms of the series coefficients of $f(x)$. 

Let us review one particular formulation which is relevant for this work. Start by defining an auxiliary function $h(x)$ such that $h(x)=x/f(x)$. Let 
\be 
h(x) =\sum_{i=0}^\infty h_i x^i
\ee 
be the series expansion of $h(x)$ around $x=0$. Now define
\be\label{expan} 
g_r : = \frac{1}{2\pi i}\oint_{|z|=\epsilon} \frac{dz}{r+1}\left(\frac{h(z)}{z}\right)^{r+1} .
\ee 
The Lagrange inversion formula states that the series expansion of $g(x)$ is of the form
\be\label{gim} 
g(x) = x\sum_{r=0}^\infty g_r x^r\,.
\ee 
The proof is fairly simple. Consider the RHS of \eqref{expan} and write it in terms of $f(z)$ and then write $z=g(u)$,
\be\label{fime} 
\frac{1}{2\pi i}\oint_{|z|=\epsilon} \frac{dz}{r+1}\left(\frac{1}{f(z)}\right)^{r+1}\!\!\! = \frac{1}{2\pi i}\oint_{|u|=\epsilon} \frac{du}{r+1}g'(u)\left(\frac{1}{f(g(u))}\right)^{r+1}\!\!\! = \frac{1}{2\pi i}\oint_{|u|=\epsilon} \frac{du}{(r+1)}\frac{g'(u)}{u^{r+1}}.
\ee
Now, taking the derivative of \eqref{gim}, 
\be 
g'(x) =\sum_{r=0}^\infty (r+1)g_rx^r
\ee 
and plugging in it into the last expression on the right in \eqref{fime} one finds \eqref{expan}.

Before seeing explicitly how this works in the context of interest, let us review some well-known facts about generating functions of Fuss-Catalan numbers, in particular, how they are interconnected via the Lagrange inversion formula.  

Let $B_k(x)$ be the generating of the Fuss-Catalan numbers ${\rm FC}_n(k,1)$. The function $B_k(x)$ satisfies the equation $B_k(x) = 1+x B_k(x)^k$. Now let us prove that
\be 
f(x):=\frac{x}{B_{k-1}(x)}, \quad g(x):= x B_{k}(x)
\ee 
are compositional inverses of each other. Start with $B_k(x) = 1+x B_k(x)^k$ and multiply by $x^{k-1}$ so that we get an equation for $g(x)$ of the form $x^{k-2}g(x) = x^{k-1}+g(x)^k$. Now let $p(x)$ be the compositional inverse of $g(x)$, i.e., $g(p(x))=x$. Letting $x=p(u)$ in $x^{k-2}g(x) = x^{k-1}+g(x)$ leads to $u\, p(u)^{k-2} = p(u)^{k-1}+u^k$. Let us prove that $f(x)$ satisfies the same equation as $p(x)$. Starting with $B_{k-1}(x) = 1+x B_{k-1}(x)^{k-1}$ and substituting $B_{k-1}(x)=x/f(x)$ gives $x/f(x) = 1+x^k/f(x)^{k-1}$. Multiplying by $f(x)^{k-1}$ we obtain the same equation satisfied by $p(x)$. 

In the case at hand, we are interested in $k=3$ so that 
\be 
h(x) = B_2(x) =\frac{1-\sqrt{1-4z}}{2z}
\ee 
is the generating function of Catalan numbers, and $g_r$ becomes the Fuss-Catalan number ${\rm FC}_r(3,1)$. In this case \eqref{expan} reads
\be\label{threeProof}
{\rm FC}_r(3,1) = \frac{1}{2\pi i}\oint_{|z|=\epsilon} \frac{dz}{r+1}\left(\frac{1-\sqrt{1-4z}}{2z^2}\right)^{r+1}
\ee 
with
\be 
g(x) = x B_3(x), \quad {\rm with}\quad B_3(x) := \sum_{r=0}^\infty {\rm FC}_r(3,1) x^r\,.
\ee 

Let us see how this applies to our construction in section \ref{combi}. Let us consider the following choice for the function $h(x)$ 
\be 
h(x) =\sum_{i=0}^\infty m_{i+2} x^i
\ee 
where $m_{i+2}$ represents a generic $(i+2)$-particle amplitude in the biadjoint $\phi^3$ scalar theory of the form $m_{i+2}(\mathbb{I},\mathbb{I})$. 

Since the mass dimension of $m_{i+2}(\mathbb{I},\mathbb{I})$ is $-2(i-1)$ we are motivated to define $m_2:=P^2$ and $m_3:=1$. Here $1/P^2$ represents a generic propagator. We will soon see why this somewhat strange definition of $m_2$ is useful. Let us start by noticing that the number of Feynman diagrams contributing to $m_{i+2}(\mathbb{I},\mathbb{I})$ is the Catalan number $\textrm{C}_{i}$.

The claim is that the form of the amplitude $A^{\phi^4}_n$ is determined by the coefficient $g_{n/2-1}$ divided by $h_0^{n/2-1}$. 

Let us compute the first few cases of \eqref{expan} in order to illustrate the use of the formula,
\begin{align*}
A^{\phi^4}_4 = \frac{g_{1}}{h_0} = & h_1  = m_3 \\
A^{\phi^4}_6 = \frac{g_{2}}{h_0^2} = & \frac{h_0^2h_2+h_0h_1^2}{h_0^2} = m_4+m_3^2\frac{1}{P^2} \\
A^{\phi^4}_8 = \frac{g_{3}}{h_0^3} = & \frac{h_3 h_0^3+3 h_1 h_2 h_0^2+h_1^3 h_0}{h_0^3} = m_5+3m_3m_4\frac{1}{P^2}+m_3^3\left(\frac{1}{P^2}\right)^2.
\end{align*}

Finally, specializing to what is called {\it planar kinematics}, in which all planar invariants are set to unity, one finds that $A^{\phi^4}_n$ counts the number of Feynman diagrams contributing to the amplitude. This is the number of ternary planar unrooted trees with $n$ leaves which is known to be the given by the Fuss-Catalan numbers. Applying the same kinematics to the $\phi^3$ amplitudes one can replace each by the corresponding Catalan numbers and therefore we reproduce the relation \eqref{expan}. 

\subsection{Extension to $\phi^p$: Iterated Structure}

Let us explicitly construct the iteration used in section \ref{genpSec} to propose the schematic structure of $\phi^p$ amplitudes. 

Let us start by defining generating functions
\be 
h_k(x)=\sum_{j=0}^\infty h_{k,j}x^j\,.
\ee 
The goal is to construct a recursive procedure that determines all coefficients $h_{k,j}$ as functions of the base case defined to be
\be 
h_3(x)=\sum_{j=0}^\infty h_{j} x^j\,.
\ee 
Note that for the base function we have denoted the coefficients by $h_j$ instead of $h_{3,j}$. This was done in order not clutter the formulas. 
Using the expression in \eqref{hRec}
\be 
h_k(x)= \sum_{j=0}^\infty h_{k,j}x^j := \sum_{j=0}^\infty  \frac{1}{2\pi i}\oint_{|z|=\epsilon} \frac{dz}{j+1}\left(\frac{h_{k-1}(z)}{z}\right)^{j+1} x^j
\ee 
let us present some results for the expansions.

For $\phi^4$ amplitudes we have $h_4(x)$ with coefficients
\be\label{phi4Num} 
\begin{array}{l}
 h_0 \\
 h_0 h_1 \\
h_0 h_1^2+h_0^2 h_2 \\
 h_0 h_1^3+3 h_0^2 h_1 h_2+h_0^3 h_3 \\
 h_0 h_1^4+6 h_0^2 h_1^2 h_2+2 h_0^3 h_2^2+4 h_0^3
   h_1 h_3+h_0^4 h_4 \\
h_0 h_1^5+10 h_0^2 h_1^3 h_2+10 h_0^3 h_1 h_2^2+10
   h_0^3 h_1^2 h_3+5 h_0^4 h_2 h_3+5 h_0^4 h_1 h_4+h_0^5
   h_5 .\\
\end{array}
\ee 

These coefficients are a refinement of the Narayana numbers. Let us see this more explicitly. Consider first the table of coefficients (see OEIS entry A134264,  \cite{oeis}),
\be 
\begin{array}{l}
 1 \\
 1 \\
 1,1 \\
 1,3,1 \\
 1,6,2,4,1 \\
 1,10,10,10,5,5,1 \\
 1,15,30,5,20,30,3,15,6,6,1 \\
 1,21,70,35,35,105,21,21,35,42,7,21,7,7,1 .\\
\end{array}
\ee
If we now set $h_0=x$ and all other $h_i=1$, then terms with the same power of $h_0$ are combined. For example, $2 h_2^2 h_0^3+4 h_1 h_3 h_0^3$ in the fifth row of \eqref{phi4Num} becomes $2x^3+4x^3=6x^3$. Carrying this out one gets
\be 
\begin{array}{l}
 x \\
 x \\
 x^2+x \\
 x^3+3 x^2+x \\
 x^4+6 x^3+6 x^2+x \\
 x^5+10 x^4+20 x^3+10 x^2+x \\
 x^6+15 x^5+50 x^4+50 x^3+15 x^2+x \\
 x^7+21 x^6+105 x^5+175 x^4+105 x^3+21 x^2+x .\\
\end{array}
\ee
These coefficients are the Narayana numbers (OEIS entry A001263,  \cite{oeis}). 

Let us consider $\phi^5$ amplitudes, so we have $h_5(x)$ with coefficients
\be\label{phi5Num} 
\begin{array}{l}
 h_0 \\
h_0^2 h_1\\
2 h_0^3 h_1^2+h_0^4 h_2 \\
5 h_0^4 h_1^3+6 h_0^5 h_1 h_2+h_0^6 h_3 \\
14 h_0^5 h_1^4+28 h_0^6 h_1^2 h_2+4 h_0^7 h_2^2+8
   h_0^7 h_1 h_3+h_0^8 h_4 \\
42 h_0^6 h_1^5+120 h_0^7 h_1^3 h_2+45 h_0^8 h_1
   h_2^2+45 h_0^8 h_1^2 h_3+10 h_0^9 h_2 h_3+10 h_0^9 h_1
   h_4+h_0^{10} h_5. \\
\end{array}
\ee 
Listing only the coefficient allows us to present one more row (OEIS entry A338135,  \cite{oeis}),
\be 
\begin{array}{l}
 1 \\
 1 \\
 2,1 \\
 5,6,1 \\
 14,28,4,8,1 \\
 42,120,45,45,10,10,1 \\
 132,495,330,22,220,132,6,66,12,12,1 . \\
\end{array}
\ee 
Once again, if we set $h_0=x$ and all other $h_i=1$, then \eqref{phi5Num} becomes the generating functions for the $2$-Narayana numbers. In general one finds the triangle of $m$-Narayana numbers, where the standard ones correspond to $m=1$. The $2$-Narayana numbers are then given by the coefficients in (see e.g. section 6.8 of \cite{nara})
\be 
\begin{array}{l}
 x \\
 x^2 \\
 x^4+2 x^3 \\
 x^6+6 x^5+5 x^4 \\
 x^8+12 x^7+28 x^6+14 x^5 \\
 x^{10}+20 x^9+90 x^8+120 x^7+42 x^6 \\
 x^{12}+30 x^{11}+220 x^{10}+550 x^9+495 x^8+132 x^7 \\
% x^{14}+42 x^{13}+455 x^{12}+1820 x^{11}+3003 x^{10}+2002
%   x^9+429 x^8. \\
\end{array}
\ee 

\subsection{One Function to Compute Them All}

There is one more interesting property of these representation of $\phi^p$ amplitudes which interconnects them. Consider the coefficients of the function $h_4(x)$. Some of them are explicitly shown in \eqref{phi4Num}. 

The claim is that the coefficients of the function $h_k(x)$ can be obtained from those of $h_4(x)$ by simply setting to zero all $h_a$ with $a \notin (k-3)\mathbb{Z}$ (see text in OEIS entry A338135 for $k=5$ case, \cite{oeis}). For example, $h_5(x)$ is obtained by setting all $h_a$ with $a$ odd to zero. Of course, every other coefficient of $h_4(x)$ vanishes completely but the ones that do not reproduce $h_5(x)$.

One direct way to understand the relation among the different generating functions $h_k(x)$ is by recalling the combinatorial problem they solve. As explained in the discussions, set $m=k-3$ and place $mq$ points on a disk. Now count all possible ways of clustering the points in non-overlapping sets so that there are $r_1$ groups of $m$ points each, $r_2$ groups of $2m$ points, etc. Clearly, $h_4(x)$, for which $m=1$, contains all other problems counted by $h_k(x)$ with $k>4$ as special cases.

\section{Computing a Region for $n=12$ that Leads to $m_7(1234576,\mathbb{I})$}\label{appB}

Directly computing amplitudes, $A^{\phi^4}_n$, using the global Schwinger formula becomes harder as $n$ grows. In this appendix, we show how to use the global Schwinger formula to find an explicit map from a region to $m_{n/2+1}(\alpha ,\mathbb{I})$. Having the precise bijection of kinematic invariants, one gets the contribution of the region without ever carrying out an integral. 
The region under consideration is 
\be 
R=\{ x_0 = x_3 <  x_1 = x_2,\, x_4=x_5,\, x_6=x_7,\, x_8=x_9 \} \,.
\ee 

Let us consider the behavior of the $G_{12}(x)$ part of the tropical potential on a region $R_{\rm ext}$ where the condition $x_3<x_1$ is relaxed an therefore contains $R$, i.e.,
\be
R \subset R_{\rm ext} = \{ x_0 = x_3 \, x_1 = x_2,\, x_4=x_5,\, x_6=x_7,\, x_8=x_9 \} \,.
\ee 
The function $G_{12}(x)$ can be seen to be a linear combination of the following $14$ piecewise linear functions
\begin{align*}
    & \min \left(x_1,x_3\right), \, \min \left(x_3,x_5\right),\, \min
   \left(x_5,x_7\right),\, \min \left(x_7,x_9\right), \\ 
   & \min
   \left(x_1,x_3,x_5\right),\, \min \left(x_3,x_5,x_7\right),\, \min
   \left(x_5,x_7,x_9\right),\,  \\ &  \min \left(x_1,x_3,x_5,x_7\right), \min
   \left(x_3,x_5,x_7,x_9\right),\min
   \left(x_1,x_3,x_5,x_7,x_9\right), \\ & x_1,x_5,x_7,x_9.
\end{align*}
Note that $x_1$ is always accompanied by $x_3$ when it is an argument in a $\min$ function. This means that when restricting to $R$, i.e. imposing $x_3<x_1$ on the functions, $x_1$ drops out and we are left with the following $11$ functions,
\begin{align*}
    & \min \left(x_3,x_5\right),\, \min
   \left(x_5,x_7\right),\, \min \left(x_7,x_9\right), \\ 
   & \min \left(x_3,x_5,x_7\right),\, \min
   \left(x_5,x_7,x_9\right),\,  \\ &  \min
   \left(x_3,x_5,x_7,x_9\right), \, x_1,\, x_3,\, x_5,\, x_7,\, x_9.
\end{align*}
It is easy to compute the coefficients of each of the $11$ functions to be
\begin{align*}
   & t_{[2,8]}-t_{[2,6]}-t_{[6,8]},\,-t_{[6,8]}+t_{[6,10]}-t_{[8,10]},\, -t_{[8,10]}+t_{[8,12]}-t_{[10
   ,12]},\\ & t_{[2,10]}-t_{[2,8]}+t_{[6,8]}-t_{[6,10]},\, -t_{[6,10]}+t_{[6,12]}+t_{[8,10]}-t_{[8,12]}, \\ & t_{[6,10]}-t_
   {[2,10]}-t_{[6,12]}, \, t_{[3,5]},\, t_{[2,6]}-t_{[3,5]},\,t_{[6,8]},\,t_{[8,10]},\,t_{[10,12]}.
\end{align*}
Note that the coefficient of $x_1$ is $t_{[3,5]}$, which is precisely the invariant in the propagator that must appear according to the rules for the non-crossing diagram corresponding to the region $R$. The only other place where $t_{[3,5]}$ appears is in the coefficient of $x_3$. This means that we can write the integral over $x_1$ as
\be 
\int_{-\infty}^\infty dx_1 \theta(x_1-x_3)\exp \left( - t_{[3,5]}(x_1-x_3)\right)  = \frac{1}{t_{[3,5]}}\,.
\ee 

Combining the left over terms and relabeling variables so that $x_a\to x_{(a-3)/2}$ one finds the ``effective'' potential 
\begin{align*}
F_6 & =  (t_{[2,8]}-t_{[2,6]}-t_{[6,8]})\min \left(x_0,x_1\right)+( -t_{[6,8]}+t_{[6,10]}-t_{[8,10]}) \min
   \left(x_1,x_2\right) \\ & +(-t_{[8,10]}+t_{[8,12]}-t_{[10
   ,12]}) \min \left(x_2,x_3\right) +(t_{[2,10]}-t_{[2,8]}+t_{[6,8]}-t_{[6,10]})\min \left(x_0,x_1,x_2\right) \\ & +( -t_{[6,10]}+t_{[6,12]}+t_{[8,10]}-t_{[8,12]})\min
   \left(x_1,x_2,x_3\right) \\ & + (t_{[6,10]}-t_
   {[2,10]}-t_{[6,12]})\min
   \left(x_0,x_1,x_2,x_3\right)+  t_{[2,6]} x_0+t_{[6,8]}x_1+t_{[8,10]}x_2+t_{[10,12]}x_3\,.
\end{align*}  
It is a simple exercise to match the coefficients with that of the tropical potential function for $m_6(\mathbb{I},\mathbb{I})$. The non-trivial fact is that the result is not only a map but a bijection between the corresponding sets of planar invariants. This is left as an exercise for the reader.

\bibliographystyle{JHEP}
\bibliography{references}

\end{document}